\theoremstyle{definition}
\theoremstyle{remark}
\newtheorem{remark}{Remark}
\theoremstyle{plain}
\newtheorem{lemma}{Lemma}
\newtheorem{proposition}{Proposition}
\newtheorem{corollary}{Corollary}
\begin{document}

\title{Non-Orthogonal Multiple Access (NOMA) With Multiple Intelligent Reflecting Surfaces}

\author{Yanyu~Cheng,~\IEEEmembership{Member,~IEEE},
        Kwok~Hung~Li,~\IEEEmembership{Senior Member,~IEEE},
        Yuanwei~Liu,~\IEEEmembership{Senior Member,~IEEE},
        Kah~Chan~Teh,~\IEEEmembership{Senior Member,~IEEE},
        and~George K.~Karagiannidis,~\IEEEmembership{Fellow,~IEEE}

\thanks{
Yanyu~Cheng, Kwok~Hung~Li, and Kah~Chan~Teh are with the School of Electrical and Electronic Engineering, Nanyang Technological University, Singapore 639798 (e-mail: ycheng022@e.ntu.edu.sg; ekhli@ntu.edu.sg; ekcteh@ntu.edu.sg).

Yuanwei~Liu is with the School of Electronic Engineering and Computer Science, Queen Mary University of London, London E1 4NS, U.K. (e-mail: yuanwei.liu@qmul.ac.uk).

George K.~Karagiannidis is with the Department of Electrical and Computer Engineering, Aristotle University of Thessaloniki, 54636 Thessaloniki, Greece (e-mail: geokarag@auth.gr).
}
}

\maketitle

\begin{abstract}
In this paper, non-orthogonal multiple access (NOMA) networks assisted by multiple intelligent reflecting surfaces (IRSs) with discrete phase shifts are investigated, in which each user device (UD) is served by an IRS to improve the quality of the received signal.
Two scenarios are considered according to whether there is a direct link between the base station (BS) and each UD, and the outage performance is analyzed for each of them.
Specifically, the asymptotic expressions for the upper and lower bounds of the outage probability in the high signal-to-noise ratio (SNR) regime are derived.
Following that, the diversity order is obtained.
It is shown that the use of discrete phase shifts does not degrade diversity order.
More importantly, simulation results reveal that a $3$-bit resolution for discrete phase shifts is sufficient to achieve near-optimal outage performance.
Simulation results also imply the superiority of IRSs over full-duplex decode-and-forward relays.
\end{abstract}

\vspace{0.3cm}

\begin{IEEEkeywords}
Discrete phase shift, intelligent reflecting surface, non-orthogonal multiple access.
\end{IEEEkeywords}

\IEEEpeerreviewmaketitle
\section{Introduction}\label{ch7_intro}
Non-orthogonal multiple access (NOMA) has been proposed as a candidate technique for future wireless networks \cite{ding2017application}.
The key idea of NOMA is to allocate multiple users to an orthogonal resource block, e.g., a time slot, a frequency band, or a spreading code, but with different power levels \cite{montalban2018multimedia,wang2019channel}.
It has been demonstrated that NOMA outperforms orthogonal multiple access (OMA) from the aspects of spectral efficiency, connection density, and user fairness \cite{liu2017non5g}.

On the other hand, intelligent reflecting surfaces (IRSs) are envisioned to provide reconfigurable wireless environments for future communication networks, which are also named reconfigurable intelligent surfaces (RISs) and large intelligent surfaces (LISs) \cite{liu2020reconfigurable}.
Specifically, an IRS consists of a large number of reconfigurable passive elements, and each element can induce a change of amplitude and phase for the incident signal \cite{mu2020exploiting,shen2019secrecy}.
By appropriately adjusting amplitude-reflection coefficients and phase-shift variables, it can improve link quality and enhance service coverage significantly \cite{zhou2020framework,dong2020secure,feng2020deep}.
Compared with the conventional communication assisting techniques, such as relays, IRSs consume less energy due to passive reflection and are able to operate in full-duplex (FD) mode without self-interference \cite{qingqing2019towards}.
Therefore, IRSs have been proposed as a cost-effective solution to enhance the spectral and energy efficiency of future wireless communication networks \cite{liu2017enhancing,cui2019secure}.
Also, IRSs have been introduced to NOMA networks for performance improvement.

\subsection{Related Work}
\textit{NOMA:} NOMA has been widely studied due to its potential application prospects.
In \cite{ding2014performance}, the authors evaluated the performance of a downlink NOMA system with randomly roaming users and showed that NOMA has better performance than the conventional multiple access (MA) techniques on ergodic sum rate.
Following that, they studied the impact of user pairing on the sum rate of fixed-power-allocation NOMA and cognitive-radio-inspired NOMA systems in \cite{ding2016impact}.
In \cite{oviedo2017fair}, the authors demonstrated that NOMA using a fair power allocation approach can always outperform OMA in terms of capacity.
A new evaluation criterion was proposed to analyze the performance gain of NOMA over OMA in \cite{xu2015new}, which considered not only sum rate but also individual rates.
In \cite{liu2016cooperative}, the authors applied the simultaneous wireless information and power transfer (SWIPT) to NOMA networks and proved that the use of SWIPT does not degrade diversity order as compared with the conventional NOMA.

\textit{IRSs with continuous phase shifts:} IRSs have attracted intensive research interest from both academia and industry \cite{zuo2020resource,ozdogan2019intelligent}.
For the application of IRSs, it is crucial to precisely estimate the channel state information (CSI).
A channel estimation protocol was proposed for IRS-assisted multiple-input single-output (MISO) networks in \cite{mishra2019channel}.
To shorten the training sequence, a channel estimation method was designed for downlink IRS-aided multiple-input multiple-output (MIMO) networks in \cite{he2019cascaded}.
A message-passing-based algorithm was proposed for uplink IRS-aided MIMO networks in \cite{liu2020matrix}.
IRSs have been demonstrated to outperform various types of relays, including both decode-and-forward (DF) and amplify-and-forward (AF) relays under half-duplex (HD) and FD modes.
For HD relaying networks, the data rate is scaled down by a factor of two, since two phases are required for the data transmission by the transmitter and relay \cite{di2020reconfigurable}.
For FD relaying networks, there is residual loop-back self-interference, which consequently degrades system performance \cite{di2020reconfigurable}.
Under the AF relaying protocol, when a relay node amplifies the received signal, it simultaneously amplifies the noise, while IRSs do not have that disadvantage \cite{qingqing2019towards,di2020reconfigurable}.
In \cite{bjornson2019intelligent}, IRSs were indicated to outperform HD DF relays when the number of reflecting elements is large enough.
In \cite{lyu2020spatial}, it was revealed that IRS-aided systems outperform FD DF relay (FDR)-aided systems when the number of IRSs exceeds a certain value.
Passive-beamforming design for IRSs is critical to system performance.
In \cite{zhang2019analysis}, multiple IRSs were deployed in a single-input single-output (SISO) system, and the passive beamforming of IRSs was optimized by minimizing outage probability.
Active and passive beamforming was jointly optimized for IRS-aided MISO systems in \cite{yu2019miso} and \cite{wu2019intelligent} with different objectives.
For physical layer security, IRSs can enhance system performance by blocking the signals of eavesdroppers \cite{han2020intelligent,guan2020intelligent}.

\textit{IRSs with discrete phase shifts:} The aforementioned works assume that IRSs have continuous phase shifts.
However, in practice, it is difficult to realize continuous phase shifts due to hardware limitations. Moreover, the complexity and cost will be extremely high, especially when there are a large number of reflecting elements.
Thus, it is more practical to consider discrete phase shifts.
In \cite{xu2019discrete}, the performance of an IRS-aided SISO system was evaluated from the data rate perspective.
In \cite{you2020channel}, the uplink transmission for an IRS-aided system was studied, and the problems of channel estimation and passive beamforming were investigated.
In \cite{guo2019weighted}, an IRS-aided MISO system was considered, and the active and passive beamforming were jointly optimized to maximize the weighted sum rate.
For IRS-aided MISO systems, the transmit power was minimized by designing active and passive beamforming in \cite{abeywickrama2020intelligent,wu2019beamformingICASSP,wu2019beamforming}.
The symbol error rate was minimized by optimizing active and passive beamforming in \cite{ye2020joint}.

\textit{IRS-assisted NOMA:} The combination of these two critical techniques, i.e., IRS and NOMA, has also been widely investigated.
In \cite{ding2020simple}, an IRS was deployed in a NOMA system to improve the coverage by assisting a cell-edge user device (UD) in data transmission.
In \cite{ding2020impact}, the impact of random phase shifting and coherent phase shifting for IRS-aided NOMA systems was further investigated.
In \cite{zhu2019power} and \cite{fu2019intelligent}, the beamforming vectors of the base station (BS) and IRS were optimized for IRS-assisted NOMA systems.
The aforementioned works assumed that the BS-IRS-UD channel is non-line-of-sight (NLoS). Since IRSs can be pre-deployed, the paths between the BS and IRSs may have line-of-sight (LoS) \cite{yang2020intelligent,liu2020ris,hou2019reconfigurable}.
For IRS-assisted NOMA networks, the authors assumed the BS-IRS-UD link to be LoS and optimized active and passive beamforming vectors in \cite{yang2020intelligent}.
The authors adopted machine learning approaches to jointly design power allocation, passive beamforming, and the position of IRS in \cite{liu2020ris}.
The performance of IRS-aided NOMA systems was analyzed in \cite{hou2019reconfigurable, cheng2020downlink}, where continuous phase shifts were assumed.

\subsection{Motivation and Contributions}
Most of the current studies focus on the centralized IRS, which has practical limitations \cite{gao2020distributed}. Alternatively, the distributed IRSs have been proposed to be more practical.
Moreover, the current works related to IRSs with discrete phase shifts mainly focus on optimizing beamforming \cite{xu2019discrete,guo2019weighted,you2020channel,abeywickrama2020intelligent,wu2019beamformingICASSP,wu2019beamforming,ye2020joint}, but lack corresponding performance analysis.
Due to taking into account IRSs with discrete phase shifts, it is a challenging task to evaluate system performance.
This motivates us to characterize the performance of NOMA systems assisted by multiple IRSs with discrete phase shifts, which is beneficial to future deployment of IRSs.
The contributions are summarized as follows:
\begin{itemize}
\item We characterize the performance of NOMA networks assisted by multiple IRSs with discrete phase shifts. More specifically, we consider two scenarios based on whether there are direct links between the BS and UDs.
\item To be practical, we adopt the Nakagami-$m$ fading model for BS-UD and BS-IRS-UD links so that those links can be either LoS or NLoS. Correspondingly, for each scenario, we derive the probability density function (PDF) and the cumulative distribution function (CDF) of the ordered channel gain by utilizing the Laplace transform.
\item We derive the high signal-to-noise ratio (SNR) asymptotic expressions for the upper and lower bounds of the outage probability for each scenario.
\item To gain further insight, we obtain the diversity order for each scenario. We demonstrate that discrete phase shifts do not jeopardize diversity order as compared with continuous phase shifts. Simulation results further reveal that a $3$-bit resolution for phase shifts is sufficient to achieve near-optimal outage performance.
\end{itemize}

\subsection{Organization and Notation}
The remainder of this paper is organized as follows.
In Section \ref{ch7_system_model}, the system model of multi-IRS assisted NOMA is described.
The performance analysis of Scenario \uppercase\expandafter{\romannumeral1} is conducted in Section \ref{ch7_s1}, followed by the analysis of Scenario \uppercase\expandafter{\romannumeral2} in Section \ref{ch7_s2}.
Numerical and simulation results are presented in Section \ref{ch7_numerical}.
Finally, our conclusion is drawn in Section \ref{ch7_summary}.

In this paper, scalars are denoted by italic letters. Vectors and matrices are denoted by bold-face letters.
For a vector $\mathbf{v}$, $\mathbf{v}^T$ denotes the transpose of $\mathbf{v}$.
$\mathrm{diag}(\mathbf{v})$ denotes a diagonal matrix in which each diagonal element is the corresponding element in $\mathbf{v}$, respectively.
$\mathrm{arg}(\cdot)$ denotes the argument of a complex number.
$\mathbb{C}^{x\times y}$ denotes the space of $x \times y$ complex-valued matrices.
$\mathrm{P_r}(\cdot)$ denotes the probability.

\section{System Model}\label{ch7_system_model}

\begin{figure}
\begin{center}
\includegraphics[width=0.9\linewidth]{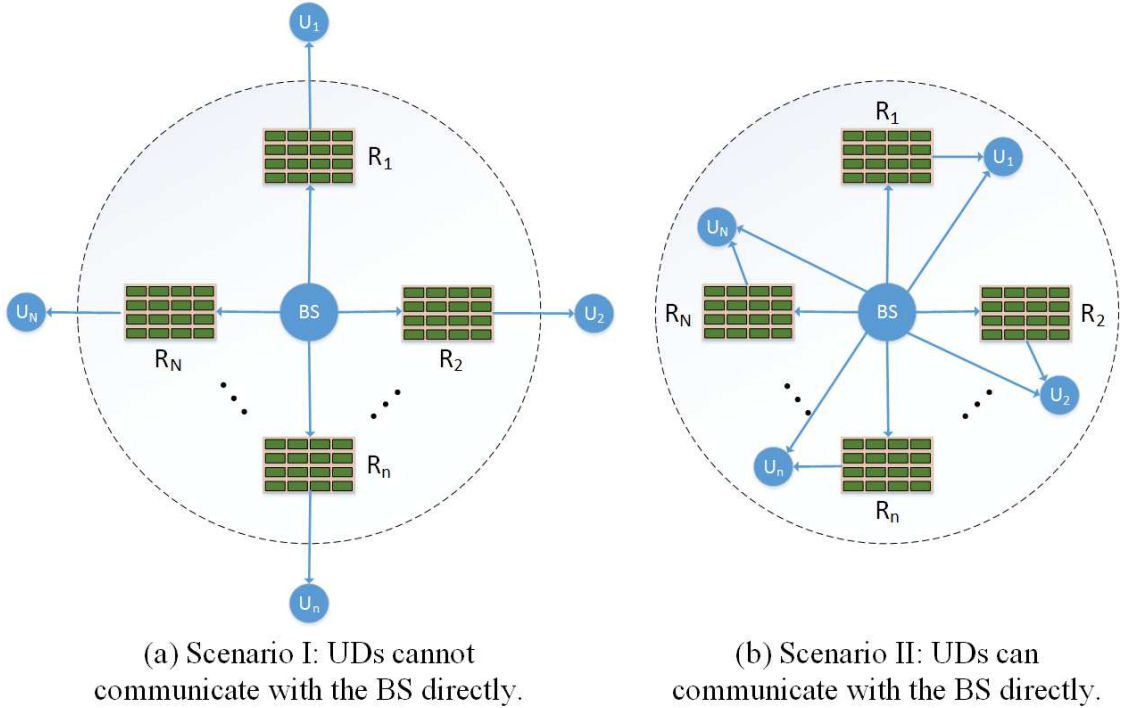}
\end{center}
\caption{The system models of multi-IRS assisted NOMA with discrete phase shifts for two scenarios.}
\label{ch7_Fig_system_model}
\end{figure}

We consider a NOMA network assisted by multiple IRSs with discrete phase shifts.
Specifically, there is a single-antenna BS and $N$ IRSs, denoted by R\textsubscript{$1$}, R\textsubscript{$2$}, $\cdots$, R\textsubscript{$N$}.
All IRSs can be pre-deployed and have a sufficient distance from each other.
Therefore, each IRS has a service coverage without overlapping with other IRSs' service coverages, and the interference signals reflected from neighboring IRSs can be neglected due to severe path loss \cite{ding2020simple}.
Within the service area of each IRS, we select a single-antenna UD for this IRS.
All $N$ selected UDs, denoted by U\textsubscript{$1$}, U\textsubscript{$2$}, $\cdots$, U\textsubscript{$N$}, form a NOMA group, and each UD is assisted by the corresponding IRS to communicate with the BS.
Based on the positions of all UDs, we consider two scenarios, Scenario \uppercase\expandafter{\romannumeral1} and Scenario \uppercase\expandafter{\romannumeral2}. More specifically, in Scenario \uppercase\expandafter{\romannumeral1}, all UDs are far away from the BS, and there is no direct link between the BS and each UD as shown in Fig. \ref{ch7_Fig_system_model}(a), while in Scenario \uppercase\expandafter{\romannumeral2}, all UDs can communicate with the BS directly as shown in Fig. \ref{ch7_Fig_system_model}(b).
Each IRS has $K$ discrete reflecting elements.
The reflection-coefficient matrix of R\textsubscript{$n$}, denoted by $\mathbf{\Theta}_n=\mathrm{diag}\left(\beta_1^n e^{j\theta_1^n}, \beta_2^n e^{j\theta_2^n}, \cdots, \beta_K^n e^{j\theta_K^n}\right)$  \big(note that $j=\sqrt{-1}$\big), can be intelligently adjusted, where $\beta_k^n \in [0,1]$ and $\theta_k^n$ are the amplitude-reflection coefficient and phase-shift variable of the $k$th element, respectively.
To be practical, we consider discrete phase shifts which have a $b$-bit resolution with $L=2^b$ levels.
By uniformly quantizing the interval $[0, 2\pi)$, we have the set of discrete phase-shift variables, $\mathcal{D}=\left\{\frac{\Delta}{2}, \frac{3\Delta}{2}, \cdots, \frac{(2L-1)\Delta}{2}\right\}$, where $\Delta=2\pi/L$.

\begin{remark}
In the system model, each IRS serves a UD in the NOMA group. It can also simultaneously serve other UDs that are not in this NOMA group within the service coverage.
Each IRS's parameters are optimized for the UD in the NOMA group, i.e., the prioritized UD.
For other UDs served by this IRS, it is a scenario of random phase shifts, and the corresponding performance has been analyzed \cite{ding2020impact}.
Hence, in this paper, we only analyze the performance of prioritized UDs.
Moreover, since IRSs can significantly improve channel quality, the performance of prioritized UDs can be guaranteed.
For each IRS, we can select the UD with the highest quality-of-service requirement to be the prioritized UD.
\end{remark}

\subsection{Channel Model}

For both scenarios, all channels experience quasi-static flat fading, and the CSI of all channels is assumed to be perfectly known at the BS.
As IRSs can be pre-deployed, both BS-IRS and IRS-UD links can be either LoS or NLoS. To realize this assumption, we adopt the Nakagami-$m$ fading model \cite{hou2019reconfigurable,cheng2020downlink}.
The channel between the BS and R\textsubscript{$n$} is denoted by $\mathbf{G}_n \in \mathbb{C}^{1\times K}$.
The channel between R\textsubscript{$n$} and U\textsubscript{$n$} is denoted by $\mathbf{g}_n \in \mathbb{C}^{K\times 1}$.
Particularly, they are $\mathbf{G}_n=[G_1^n, G_2^n, \cdots, G_K^n]$ and $\mathbf{g}_n=[g_1^n, g_2^n, \cdots, g_K^n]^T$, respectively.
All elements in $\mathbf{G}_n$ have a fading parameter of $m_G$, and all elements in $\mathbf{g}_n$ have a fading parameter of $m_g$.
For Scenario \uppercase\expandafter{\romannumeral2}, besides BS-IRS-UD links, there are direct links between the BS and UDs. The channel between the BS and U\textsubscript{$n$} is denoted by $h_n$, which also follows the Nakagami-$m$ fading model with a fading parameter of $m_h$.
In particular, it is NLoS for $m_\mathcal{G}=1$ and it is LoS for $m_\mathcal{G}>1$ $\left(\mathcal{G} \in \{G, g, h\}\right)$.

\subsection{Signal Model}
\subsubsection{Scenario \uppercase\expandafter{\romannumeral1}}
In Scenario \uppercase\expandafter{\romannumeral1}, there is no direct link between the BS and each UD.
Without loss of generality, assume that $|\mathbf{G}_1 \mathbf{\Theta}_1 \mathbf{g}_1| \le |\mathbf{G}_2 \mathbf{\Theta}_2 \mathbf{g}_2| \le \cdots \le |\mathbf{G}_N \mathbf{\Theta}_N \mathbf{g}_N|$.
The BS broadcasts the signal $x$ which is given by
\begin{equation}
\begin{split}
x=\sum_{n=1}^N \sqrt{\alpha_n P}s_n,
\end{split}
\end{equation}
where $P$ is the transmit power, $s_n$ is the signal transmitted to U\textsubscript{$n$}, and $\alpha_n$ is the power allocation coefficient of U\textsubscript{$n$} $\left(\sum_{n=1}^N \alpha_n=1\right)$.
Note that we assume the fixed power allocation sharing between all UDs and set $\alpha_1 > \alpha_2 > \cdots > \alpha_N$ for user fairness \cite{ding2020simple,ding2020impact,hou2019reconfigurable}.
The received signal at U\textsubscript{$n$} is expressed as
\begin{equation}
\begin{split}
y_n^1=\mathbf{G}_n \mathbf{\Theta}_n \mathbf{g}_n x+n_n,
\end{split}
\end{equation}
where $n_n$ is the additive white Gaussian noise (AWGN) at U\textsubscript{$n$}.
Based on the successive interference cancellation (SIC) principle, U\textsubscript{$n$} needs to detect the signals of U\textsubscript{$l$} $(l \le n)$ in a successive manner.
The signal-to-interference-plus-noise ratio (SINR) of detecting U\textsubscript{$l$}'s signal at U\textsubscript{$n$} $(l \le n < N\ \mathrm{or}\ l < n = N)$ is given by
\begin{equation}\label{ch7_eq_Without_SINR}
\begin{split}
\Psi_{n\leftarrow l}^1=\frac{|\mathbf{G}_n \mathbf{\Theta}_n \mathbf{g}_n|^2 \alpha_l}{|\mathbf{G}_n \mathbf{\Theta}_n \mathbf{g}_n|^2 \sum_{i=l+1}^N \alpha_i+ \frac{1}{\rho}},
\end{split}
\end{equation}
where $\rho=P/\sigma_n^2.$
As a special case, when $l = n = N$, the SNR of decoding U\textsubscript{$N$}'s signal at U\textsubscript{$N$} is given by
\begin{equation}
\begin{split}
\Psi_{N\leftarrow N}^1=|\mathbf{G}_N \mathbf{\Theta}_N \mathbf{g}_N|^2 \alpha_N\rho.
\end{split}
\end{equation}

\subsubsection{Scenario \uppercase\expandafter{\romannumeral2}}
As compared with Scenario \uppercase\expandafter{\romannumeral1}, the difference is that there are direct links between the BS and all UDs in Scenario \uppercase\expandafter{\romannumeral2}.
Without loss of generality, we assume that $|h_1 + \mathbf{G}_1 \mathbf{\Theta}_1 \mathbf{g}_1| \le |h_2 + \mathbf{G}_2 \mathbf{\Theta}_2 \mathbf{g}_2| \le \cdots \le |h_N + \mathbf{G}_N \mathbf{\Theta}_N \mathbf{g}_N|$.
The received signal at U\textsubscript{$n$} is given by
\begin{equation}
\begin{split}
y_n^2=(h_n + \mathbf{G}_n \mathbf{\Theta}_n \mathbf{g}_n) x+n_n.
\end{split}
\end{equation}
The SINR of detecting U\textsubscript{$l$}'s signal at U\textsubscript{$n$} $(l \le n < N\ \mathrm{or}\ l < n = N)$ is given by
\begin{equation}
\begin{split}
\Psi_{n\leftarrow l}^2=\frac{|h_n + \mathbf{G}_n \mathbf{\Theta}_n \mathbf{g}_n|^2 \alpha_l}{|h_n + \mathbf{G}_n \mathbf{\Theta}_n \mathbf{g}_n|^2 \sum_{i=l+1}^N \alpha_i+ \frac{1}{\rho}}.
\end{split}
\end{equation}
For the case of $l = n = N$, the SNR of decoding U\textsubscript{$N$}'s signal at U\textsubscript{$N$} is given by
\begin{equation}
\begin{split}
\Psi_{N\leftarrow N}^2=|h_n + \mathbf{G}_N \mathbf{\Theta}_N \mathbf{g}_N|^2 \alpha_N\rho.
\end{split}
\end{equation}

\section{Scenario \uppercase\expandafter{\romannumeral1} (Without Direct Link)}\label{ch7_s1}
In this section, we will analyze the outage performance of Scenario \uppercase\expandafter{\romannumeral1}.

\subsection{IRS Parameters for Scenario \uppercase\expandafter{\romannumeral1}}
We aim to provide the best channel quality to each UD by adjusting the parameters of each IRS.
For the BS-IRS-U\textsubscript{$n$} link, it is to maximize $|\mathbf{G}_n \mathbf{\Theta}_n \mathbf{g}_n| =\left|\sum_{k=1}^{K}\beta_k^n G_k^n g_k^n e^{j\theta_k^n}\right|$. This can be achieved by intelligently adjusting $\theta_k^n$, the phase-shift variable of each element.
The optimal discrete phase-shift variables are derived below.
\begin{lemma}
Denote the optimal discrete phase-shift variable by $\hat{\theta}_k^{1,n}$. Then, we have
\begin{equation}\label{ch7_eq_Lemma_Link_Without}
\begin{split}
\hat{\theta}_k^{1,n} = \Delta \left( \left\lfloor \frac{\bar{\theta}_k^{1,n}}{\Delta} \right\rfloor+\frac{1}{2}\right), \quad k = 1, 2, \cdots, K,
\end{split}
\end{equation}
where $\lfloor \cdot \rfloor$ is the floor function, $\bar{\theta}_k^{1,n}=\tilde{\theta}_n-\mathrm{arg}(G_k^n g_k^n)$, and $\tilde{\theta}_n$ is an arbitrary constant ranging in $[0, 2\pi)$.
\end{lemma}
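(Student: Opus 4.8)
The plan is to convert the maximization of the complex magnitude into a real phase-alignment problem that decouples across the $K$ reflecting elements. First I would write each summand in polar form, $\beta_k^n G_k^n g_k^n e^{j\theta_k^n}=\beta_k^n\left|G_k^n g_k^n\right|e^{j\left(\theta_k^n+\mathrm{arg}(G_k^n g_k^n)\right)}$, so that $|\mathbf{G}_n \mathbf{\Theta}_n \mathbf{g}_n|$ is the modulus of a sum of vectors whose nonnegative lengths $\beta_k^n\left|G_k^n g_k^n\right|$ are fixed and whose directions $\theta_k^n+\mathrm{arg}(G_k^n g_k^n)$ are the only free quantities.

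Next I would invoke the support-function identity $|z|=\max_{\gamma\in[0,2\pi)}\mathrm{Re}\!\left(z e^{-j\gamma}\right)$ to obtain
\[
\left|\sum_{k=1}^K \beta_k^n\left|G_k^n g_k^n\right|e^{j\left(\theta_k^n+\mathrm{arg}(G_k^n g_k^n)\right)}\right|=\max_{\gamma}\sum_{k=1}^K \beta_k^n\left|G_k^n g_k^n\right|\cos\!\left(\theta_k^n+\mathrm{arg}(G_k^n g_k^n)-\gamma\right).
\]
Since the element phases $\theta_k^n$ are chosen independently, I would interchange the maximization over $\{\theta_k^n\}$ with that over $\gamma$ (both equal the joint maximum), so that for a fixed reference direction $\gamma$ the problem separates into $K$ scalar problems, each placing $\theta_k^n+\mathrm{arg}(G_k^n g_k^n)$ as close to $\gamma$ as possible over the grid $\mathcal{D}$. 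Writing the target continuous phase as $\bar{\theta}_k^{1,n}=\gamma-\mathrm{arg}(G_k^n g_k^n)$, the $k$th subproblem is exactly nearest-neighbour quantization of $\bar{\theta}_k^{1,n}$ onto $\mathcal{D}$.

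The third step is to verify that the floor expression realizes this quantizer. Because $\mathcal{D}=\left\{(m+\tfrac12)\Delta:m=0,\dots,L-1\right\}$ consists precisely of the midpoints of the cells $[m\Delta,(m+1)\Delta)$, the grid point closest to any $\bar{\theta}_k^{1,n}$ is the midpoint of the cell containing it; identifying the cell index as $m=\left\lfloor \bar{\theta}_k^{1,n}/\Delta\right\rfloor$ yields $\Delta\left(\left\lfloor\bar{\theta}_k^{1,n}/\Delta\right\rfloor+\tfrac12\right)$, which is exactly the claimed $\hat{\theta}_k^{1,n}$ and forces the residual phase error into $[-\Delta/2,\Delta/2)$.

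The point I would treat most carefully is the role of the reference direction $\tilde{\theta}_n$, since the outer maximization over $\gamma$ has been suppressed. Here I would supply a short exchange argument: at the true optimum the resultant points in some direction $\gamma^{*}$, and if any element were not the grid point nearest $\gamma^{*}$, rounding it inward would strictly increase its contribution $\cos(\cdot-\gamma^{*})$ and hence the projection of the resultant onto $\gamma^{*}$, contradicting optimality; thus the optimal phases are the quantization of $\gamma^{*}-\mathrm{arg}(G_k^n g_k^n)$. For the continuous-phase benchmark every $\gamma$ attains the same modulus $\sum_k\beta_k^n\left|G_k^n g_k^n\right|$, so the reference may be fixed freely; carrying this parametrization into the discrete case is what lets me state the optimum through an arbitrary $\tilde{\theta}_n$ as in \eqref{ch7_eq_Lemma_Link_Without}, which is all that the subsequent outage analysis relies on. I expect the exchange argument connecting the per-direction quantizer to the global optimum to be the only genuinely delicate step, the remaining manipulations being routine.
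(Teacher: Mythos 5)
Your proof is correct, and it is genuinely more complete than the paper's own argument, although the final construction coincides. The paper's proof is essentially a two-sentence assertion: align all phases of $G_k^n g_k^n e^{j\theta_k^n}$ to a common (arbitrary) direction to obtain the continuous optimum $\bar{\theta}_k^{1,n}$, then round each to the nearest level of $\mathcal{D}$; the optimality of this rounding among all $L^K$ discrete configurations is never actually argued. You supply exactly that missing content: the support-function identity $|z|=\max_{\gamma}\mathrm{Re}\left(ze^{-j\gamma}\right)$ decouples the problem across elements for a fixed reference direction, the floor expression is verified to be the nearest-midpoint quantizer (which forces every residual error into $[-\Delta/2,\Delta/2)$), and the exchange argument --- replacing any misquantized element by the level nearest to the direction $\gamma^{*}$ of the optimal resultant strictly increases the projection onto $\gamma^{*}$, hence the modulus --- ties the per-direction quantizer to the global discrete optimum. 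You also correctly isolate the one loose point in the lemma as stated: with discrete levels the true optimum is the quantization against a \emph{specific} reference $\gamma^{*}$, not against an arbitrary $\tilde{\theta}_n$; quantizing an arbitrary reference is in general only near-optimal. As you observe, this looseness is harmless downstream, because the channel statistics in Lemma 2 and Appendix A use only the fact that each phase error lies in $[-\Delta/2,\Delta/2)$ (giving the bound with $a=\beta\cos\left(\pi/2^b\right)$), and that holds for every choice of $\tilde{\theta}_n$.
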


\begin{proof}
By adjusting phase-shift variables to make all phases of $G_k^n g_k^n e^{j\theta_k^n}$ to be the same, we can obtain the optimal continuous phase-shift variables.
There is more than one solution for $\left\{\theta_k^n\right\}$, and the generalized solution is $\bar{\theta}_k^{1,n}$.
Since we consider discrete phase shifts, we can adjust the phase-shift variable of each element to be close to the optimal continuous phase-shift variable and obtain the optimal discrete phase-shift variable as \eqref{ch7_eq_Lemma_Link_Without}. This completes the proof.
\end{proof}

The equivalent channel gain after adopting optimal continuous phase shifts $\big\{\bar{\theta}_k^{1,n}\big\}$ is given by $|\mathbf{G}_n \mathbf{\Theta}_n \mathbf{g}_n| = \beta \left|\sum_{k=1}^{K} G_k^n g_k^n e^{j\bar{\theta}_k^{1,n}}\right|
= \beta\sum_{k=1}^{K} |G_k^n| |g_k^n|$, where we assume that $\beta_k^n=\beta$, $\forall n, k$, without loss of generality.
For discrete phase shifts, the quantization error at the $k$th element is given by $\theta_k^{1,n,er}=\hat{\theta}_k^{1,n}-\bar{\theta}_k^{1,n}$, which is uniformly distributed in $\big[-\frac{\Delta}{2}, \frac{\Delta}{2}\big)$.
As such, after adopting the optimal discrete phase-shift variables $\big\{\hat{\theta}_k^{1,n}\big\}$, we can obtain the equivalent channel gain as
\begin{equation}
\begin{split}
|\mathbf{G}_n \mathbf{\Theta}_n \mathbf{g}_n| & = \beta \left|\sum_{k=1}^{K} G_k^n g_k^n e^{j\hat{\theta}_k^{1,n}}\right|
=\beta \left|\sum_{k=1}^{K} G_k^n g_k^n e^{j\bar{\theta}_k^{1,n}}e^{j\theta_k^{1,n,er}}\right|
= \beta \left|\sum_{k=1}^{K} |G_k^n| |g_k^n| e^{j\theta_k^{1,n,er}}\right|.
\end{split}
\end{equation}

\begin{figure}
\begin{center}
\includegraphics[width=0.75\linewidth]{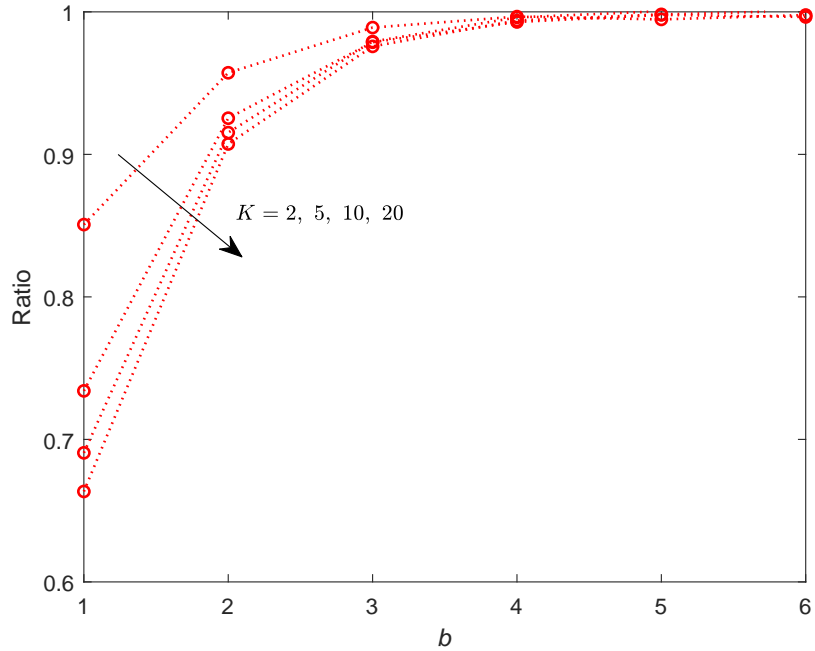}
\end{center}
\caption{The ratio of the average equivalent channel gain with discrete phase shifts to that with continuous phase shifts when $m_G=2$, $m_g=1$, and $\beta=0.9$ in Scenario \uppercase\expandafter{\romannumeral1}.}
\label{ch7_Fig_channel_gain_vs_b_without}
\end{figure}

To verify our analysis of the optimal discrete phase shifts, we have conducted Monte Carlo simulations, and the results are shown in Fig. \ref{ch7_Fig_channel_gain_vs_b_without}.
It is observed that the ratio of the equivalent channel gain with discrete phase shifts to that with continuous phase shifts increases as the value of $b$ increases, and gradually approaches $1$.

\subsection{Channel Statistics without Direct Link}
The derivation of channel statistics for IRS-aided systems is a challenging task.
In this paper, we take into account the discrete phase shifts, which further increases the difficulty of the derivation.
It is proved that the central limit theorem (CLT)-based channel statistics of the BS-IRS-UD link are inaccurate when the channel gain is near $0$ \cite{ding2020impact,cheng2020downlink}.
As we aim to obtain the diversity order, we cannot use the CLT to derive channel statistics.
On the other hand, the Laplace transform can be used to derive the accurate channel statistics for the channel gain near $0$ \cite{ding2020impact,cheng2020downlink}.
Therefore, we directly use it to derive new channel statistics of ordered UDs.
Consequently, we can gain some insightful results, such as diversity order, although the closed-form expression for the outage probability cannot be derived.
The channel statistics for Scenario \uppercase\expandafter{\romannumeral1} are shown in the following lemma.

\begin{lemma}\label{ch7_Lemm_Without}
Denote that $Y_n=\beta \left|\sum_{k=1}^{K} |G_k^n| |g_k^n| e^{j\theta_k^{1,n,er}}\right|$ which is the $n$th channel gain in ascending order. Based on order statistics, when $m_G\neq m_g$ and $b\ge 2$, the CDF of $Y_n$'s lower bound for $y\rightarrow0^+$ is given by
\begin{equation}\label{ch7_eq_Lemma_Without_1}
\begin{split}
F_{Y_n,low}^{0^+}(y) = &\frac{N!}{(N-n)!(n-1)!} \sum_{i=0}^{N-n} \binom{N-n}{i}\frac{(-1)^i}{n+i}
\xi_1^{n+i}y^{2m_sK(n+i)},
\end{split}
\end{equation}
where $a=\beta \cos\left(\frac{\pi}{2^b}\right)$, $m_s=\min\{m_G, m_g\}$, $m_l=\max\{m_G, m_g\}$, $\xi_1=\zeta_1/a^{2m_sK}$, $\zeta_1=\frac{\big(\sqrt{\pi}4^{m_s-m_l+1}(m_sm_l)^{m_s}\Gamma(2m_s)\Gamma(2m_l-2m_s)\big)^K}
{2m_sK \Gamma(2m_sK) \big(\Gamma(m_s)\Gamma(m_l)\Gamma\left(m_l-m_s+\frac{1}{2}\right)\big)^K}$, and $\Gamma(\cdot)$ is the gamma function.

Note that when $b\rightarrow \infty$, it is the case of continuous phase shifts. The CDF of $Y_n$ for $y\rightarrow0^+$ when $b\rightarrow \infty$ is given by
\begin{equation}\label{ch7_eq_Lemma_Without_2}
\begin{split}
F_{Y_n}^{0^+,b\rightarrow \infty}(y) = &\frac{N!}{(N-n)!(n-1)!}\sum_{i=0}^{N-n} \binom{N-n}{i}
\frac{(-1)^i}{n+i}\xi_2^{n+i}y^{2m_sK(n+i)},
\end{split}
\end{equation}
where $\xi_2=\zeta_1/\beta^{2m_sK}$.
\end{lemma}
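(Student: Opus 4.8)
The plan is to prove the statement in four stages: convert the random phase error into a deterministic worst-case factor, characterize the single-element product gain near the origin, propagate this local behavior to the $K$-element sum via the Laplace transform, and finally dress the result with order statistics. First I would dispose of the quantization error deterministically. Since each $\theta_k^{1,n,er}$ lies in $\big[-\frac{\Delta}{2},\frac{\Delta}{2}\big)$ with $\Delta=2\pi/2^b$, and $\cos$ is even and decreasing on $[0,\pi/2^b]$ whenever $b\ge 2$, we have $\cos(\theta_k^{1,n,er})\ge\cos(\pi/2^b)>0$. Bounding the magnitude of the complex sum by its real part then yields
$$Y_n\;\ge\;\beta\cos\!\big(\tfrac{\pi}{2^b}\big)\sum_{k=1}^{K}|G_k^n||g_k^n|\;=\;a\sum_{k=1}^{K}|G_k^n||g_k^n|\;=:\;Z_n,$$
which defines the lower-bounding variable $Z_n$ with $a=\beta\cos(\pi/2^b)$. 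The hypothesis $b\ge 2$ is exactly what makes $a>0$ and the bound nontrivial, and since $Z_n\le Y_n$ pointwise, the CDF of $Z_n$ serves as the CDF of $Y_n$'s lower bound.

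Second, I would determine the small-argument behavior of the per-element product $R_k=|G_k^n||g_k^n|$, where $|G_k^n|$ and $|g_k^n|$ are independent Nakagami-$m$ with parameters $m_G$ and $m_g$. Writing the product density as $f_R(r)=\int_0^\infty f_{|G|}(r/y)\,f_{|g|}(y)\,y^{-1}\,dy$, this evaluates to a modified-Bessel-function form of order $|m_G-m_g|$, whose expansion as $r\to 0^+$ is controlled by the smaller index, giving $f_R(r)\sim c_0\,r^{2m_s-1}$ with $m_s=\min\{m_G,m_g\}$, $m_l=\max\{m_G,m_g\}$, and a closed-form gamma-function coefficient $c_0$. \textbf{This step is the crux}, and it is precisely where the assumption $m_G\neq m_g$ is needed: when $m_G=m_g$ the relevant Bessel index vanishes and the leading term picks up a factor $\log(1/r)$, destroying the clean power law and the diversity-order argument it supports.

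Third, I would transfer this local behavior to the sum $S=\sum_{k=1}^{K}R_k$ through the Laplace transform, as the surrounding discussion advertises. From $f_R(r)\sim c_0\,r^{2m_s-1}$ a Watson's-lemma argument gives $\mathcal{L}\{f_R\}(s)\sim c_0\Gamma(2m_s)\,s^{-2m_s}$ as $s\to\infty$; independence makes $\mathcal{L}\{f_S\}=\big(\mathcal{L}\{f_R\}\big)^{K}$, and inverting the resulting $s^{-2m_sK}$ tail yields $f_S(y)\sim\frac{(c_0\Gamma(2m_s))^{K}}{\Gamma(2m_sK)}\,y^{2m_sK-1}$ near $0$. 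Integrating and rescaling through $Z_n=aS$ produces $F_{Z_n}(y)\sim\xi_1\,y^{2m_sK}$ with $\xi_1=\zeta_1/a^{2m_sK}$, and collecting the accumulated gamma factors reproduces the stated $\zeta_1$.

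Finally, I would apply order statistics. Using the incomplete-beta representation of the $n$th ascending order statistic and expanding $(1-u)^{N-n}$ binomially gives
$$F_{Y_n,low}(y)=\frac{N!}{(N-n)!(n-1)!}\sum_{i=0}^{N-n}\binom{N-n}{i}\frac{(-1)^i}{n+i}\big(F_{Z_n}(y)\big)^{n+i},$$
into which the monomial $F_{Z_n}(y)\sim\xi_1\,y^{2m_sK}$ substitutes directly to yield \eqref{ch7_eq_Lemma_Without_1}. The continuous-phase expression \eqref{ch7_eq_Lemma_Without_2} then follows by letting $b\to\infty$: the quantization error disappears, $\cos(\pi/2^b)\to 1$ so $a\to\beta$, the lower bound becomes exact, and $\xi_1\to\xi_2=\zeta_1/\beta^{2m_sK}$.
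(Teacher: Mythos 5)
Your proposal is correct, and its skeleton coincides with the paper's proof: the deterministic bound $Y_n \ge a\sum_k |G_k^n||g_k^n|$ with $a=\beta\cos(\pi/2^b)$, a per-element analysis of the Nakagami product, a Laplace-transform powering argument for the $K$-fold sum, inversion near the origin, and finally the standard order-statistics expansion. The one place where you genuinely depart from the paper is the crux step of extracting the $s^{-2m_s}$ tail of the per-element transform. The paper computes the \emph{exact} Laplace transform of the Bessel-form product PDF via \cite[eq.~(6.621.3)]{gradshteyn2007}, obtaining a Gauss hypergeometric expression, and then invokes \cite[eq.~(9.122.1)]{gradshteyn2007} as $s\to\infty$, where the validity condition of that identity is precisely what forces $m_s<m_l$, i.e., $m_G\neq m_g$. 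You instead expand the product density at the origin using the small-argument behavior of $K_{m_s-m_l}$, getting $f_R(r)\sim c_0\,r^{2m_s-1}$ with $c_0=\frac{2(m_sm_l)^{m_s}\Gamma(m_l-m_s)}{\Gamma(m_s)\Gamma(m_l)}$, and then apply Watson's lemma; one can check via the Legendre duplication formula that $c_0\Gamma(2m_s)$ equals the paper's constant $\phi_2$, so the two routes agree exactly, including the coefficient $\zeta_1$. Your route is more elementary (no exact hypergeometric evaluation needed) and makes the role of $m_G\neq m_g$ conceptually transparent — the logarithmic singularity of $K_0$ is what would break the power law — whereas in the paper this condition appears only as a technical hypothesis of a series identity. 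One caveat applies equally to both arguments: passing from the large-$s$ behavior of $\mathcal{L}\{f_S\}$ back to the small-$y$ behavior of $F_S$ is an Abelian-to-Tauberian inversion that strictly requires a Karamata-type justification (or monotone-density argument); the paper performs the same formal term-by-term inversion via \cite[eq.~(17.13.3)]{gradshteyn2007}, so your proposal is no less rigorous than the published proof on this point.
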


\begin{proof}
See Appendix \ref{ch7_Appe_Lemm_Without}.
\end{proof}

\subsection{Performance Analysis of NOMA}

The outage probability of U\textsubscript{$n$} under the NOMA scheme is given by
\begin{equation}
\begin{split}
\mathbb{P}_n^1=1-\mathrm{P_r}(\Psi_{n\leftarrow 1}^1 \ge \tilde{\gamma}_1, \Psi_{n\leftarrow 2}^1 \ge \tilde{\gamma}_2, \cdots, \Psi_{n\leftarrow n}^1 \ge \tilde{\gamma}_n),
\end{split}
\end{equation}
where $\tilde{\gamma}_i=2^{\tilde{R}_i}-1$ with $\tilde{R}_i$ being the target rate of U\textsubscript{$i$}.
Furthermore, $\mathbb{P}_n^1$ can be transformed into
\begin{equation}
\begin{split}
\mathbb{P}_n^1&=1-\mathrm{P_r}\left(Y_n^2 \ge \tilde{\rho}_{n\leftarrow 1}, Y_n^2 \ge \tilde{\rho}_{n\leftarrow 2}, \cdots, Y_n^2 \ge \tilde{\rho}_{n\leftarrow n}\right)\\
&=\mathrm{P_r}\left(Y_n^2 < \tilde{\rho}_{n,max}\right),
\end{split}
\end{equation}
where $\tilde{\rho}_{n,max}=\max\left\{\tilde{\rho}_{n\leftarrow 1}, \tilde{\rho}_{n\leftarrow 2}, \cdots, \tilde{\rho}_{n\leftarrow n}\right\}$ and
\begin{equation}
\begin{split}
& \tilde{\rho}_{n\leftarrow l} = \left\{
\begin{aligned}
&\frac{\tilde{\gamma}_l}{\rho\left(\alpha_l-\tilde{\gamma}_l\sum_{i=l+1}^N \alpha_i\right)}, l \le n < N\ \mathrm{or}\ l < n = N,\\
&\frac{\tilde{\gamma}_N}{\rho\alpha_N}, \qquad \qquad \qquad \ \ \ \  l = n = N.
\end{aligned}
\right.
\end{split}
\end{equation}
Note that we need to ensure $\alpha_l-\tilde{\gamma}_l\sum_{i=l+1}^N \alpha_i>0$ for $1 \le l \le N-1$. This is because the ceiling of \eqref{ch7_eq_Without_SINR} is $\alpha_l/\big(\sum_{i=l+1}^N \alpha_i\big)$, which needs to be greater than $\tilde{\gamma}_l$ so that it is possible to implement the SIC successfully.

Based on Lemma \ref{ch7_Lemm_Without}, we can analyze the outage performance in the high-SNR regime as shown in the following proposition.
\begin{proposition}\label{ch7_Prop_Without_NOMA}
Under the NOMA scheme in Scenario \uppercase\expandafter{\romannumeral1}, when $b\ge 2$, the high-SNR asymptotic expressions for the upper and lower bounds of U\textsubscript{$n$}'s outage probability are as follows:
\begin{equation}\label{ch7_eq_Without_1}
\begin{split}
\mathbb{P}_n^{1, up, \infty} = \frac{N!}{(N-n)!n!}  \xi_1^n \tilde{\rho}_{n,max}^{nm_sK},
\end{split}
\end{equation}
\begin{equation}\label{ch7_eq_Without_3}
\begin{split}
\mathbb{P}_n^{1, low, \infty}&=\frac{N!}{(N-n)!n!} \xi_2^n \tilde{\rho}_{n,max}^{nm_sK}.
\end{split}
\end{equation}
Note that \eqref{ch7_eq_Without_3} is also the outage probability for the case of continuous shifts.
\end{proposition}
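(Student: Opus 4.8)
The plan is to express the outage probability directly as a value of the CDF of the ordered channel gain $Y_n$ and then invoke Lemma~\ref{ch7_Lemm_Without} in the high-SNR limit. Starting from $\mathbb{P}_n^1=\mathrm{P_r}\left(Y_n^2<\tilde{\rho}_{n,max}\right)$ and using the nonnegativity of $Y_n$, I would first rewrite this as $\mathbb{P}_n^1=\mathrm{P_r}\left(Y_n<\sqrt{\tilde{\rho}_{n,max}}\right)=F_{Y_n}\!\left(\sqrt{\tilde{\rho}_{n,max}}\right)$. The key observation is that the high-SNR regime $\rho\rightarrow\infty$ forces $\tilde{\rho}_{n,max}\rightarrow 0$, so the argument $\sqrt{\tilde{\rho}_{n,max}}\rightarrow 0^+$ lands exactly in the regime $y\rightarrow 0^+$ for which the expressions in Lemma~\ref{ch7_Lemm_Without} are valid.

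Next, I would translate the two channel-gain bounds of Lemma~\ref{ch7_Lemm_Without} into outage bounds. Since \eqref{ch7_eq_Lemma_Without_1} is the CDF of a lower bound on the channel gain, the event ordering $\{Y_n<y\}\subseteq\{Y_{n,low}<y\}$ gives $F_{Y_n}(y)\le F_{Y_n,low}^{0^+}(y)$, so evaluating \eqref{ch7_eq_Lemma_Without_1} at $y=\sqrt{\tilde{\rho}_{n,max}}$ yields an upper bound on $\mathbb{P}_n^1$; symmetrically, the continuous-phase-shift CDF \eqref{ch7_eq_Lemma_Without_2} corresponds to an upper bound on the channel gain and hence produces a lower bound on $\mathbb{P}_n^1$. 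For the upper bound I would substitute $y=\sqrt{\tilde{\rho}_{n,max}}$ into \eqref{ch7_eq_Lemma_Without_1}, which replaces every factor $y^{2m_sK(n+i)}$ by $\tilde{\rho}_{n,max}^{m_sK(n+i)}$, producing a finite sum over $i$ whose terms are ordered by strictly increasing powers of the vanishing quantity $\tilde{\rho}_{n,max}$.

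The core of the argument is the leading-order extraction. As $\tilde{\rho}_{n,max}\rightarrow 0$, the dominant contribution comes from the smallest exponent, i.e.\ the $i=0$ term, and every term with $i\ge 1$ is of strictly higher order and thus asymptotically negligible. Retaining only the $i=0$ term gives $\frac{N!}{(N-n)!(n-1)!}\cdot\frac{1}{n}\,\xi_1^{\,n}\,\tilde{\rho}_{n,max}^{\,nm_sK}$, and the identity $n\cdot(n-1)!=n!$ collapses this to precisely \eqref{ch7_eq_Without_1}. Repeating the identical substitution and leading-term extraction on \eqref{ch7_eq_Lemma_Without_2}, with $\xi_1$ replaced by $\xi_2$, yields \eqref{ch7_eq_Without_3}; the remark that this coincides with the continuous-phase-shift outage probability follows because \eqref{ch7_eq_Lemma_Without_2} is the exact $b\rightarrow\infty$ CDF rather than a bound.

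Since essentially all of the analytic difficulty has already been absorbed into Lemma~\ref{ch7_Lemm_Without}, the main obstacle here is conceptual rather than computational: one must justify discarding the alternating higher-order terms of the finite sum. This is clean in the present setting because the sum is finite and the surviving $i=0$ term carries a strictly positive coefficient, so no cancellation with the leading term can occur and the single-term high-SNR asymptotic is rigorous.
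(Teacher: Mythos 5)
Your proposal is correct and follows essentially the same route as the paper's own proof: evaluate the Lemma~\ref{ch7_Lemm_Without} CDFs at $\sqrt{\tilde{\rho}_{n,max}}$, and extract the lowest-order ($i=0$) term as $\tilde{\rho}_{n,max}\rightarrow 0$, using $n\cdot(n-1)!=n!$ to obtain the stated coefficients. The extra justifications you supply --- the event-inclusion argument for why the lower-bound channel gain yields an upper-bound outage probability, and the remark that the finite sum's positive leading coefficient rules out cancellation --- are sound refinements of steps the paper leaves implicit.
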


\begin{proof}
For the upper bound, we have
\begin{equation}
\begin{split}
\mathbb{P}_n^{1,up,\infty}=&F_{Y_n,low}^{0^+}\left(\sqrt{\tilde{\rho}_{n,max}}\right)
= \frac{N!}{(N-n)!(n-1)!}
\sum_{i=0}^{N-n} \binom{N-n}{i}\frac{(-1)^i}{n+i} \xi_1^{n+i} \tilde{\rho}_{n,max}^{m_sK(n+i)}.
\end{split}
\end{equation}
Then, by extracting the lowest-order term, we can obtain \eqref{ch7_eq_Without_1}.

It is the case of continuous shifts when $b \rightarrow \infty$, and the corresponding outage probability is the lower bound.
We have
\begin{equation}
\begin{split}
\mathbb{P}_n^{1,low,\infty}=&F_{Y_n}^{0^+,b \rightarrow \infty}\left(\sqrt{\tilde{\rho}_{n,max}}\right)
= \frac{N!}{(N-n)!(n-1)!}
\sum_{i=0}^{N-n} \binom{N-n}{i}\frac{(-1)^i}{n+i} \xi_2^{n+i} \tilde{\rho}_{n,max}^{m_sK(n+i)}.
\end{split}
\end{equation}
Next, after extracting the lowest-order term, we can obtain \eqref{ch7_eq_Without_3}.
This completes the proof.
\end{proof}

\begin{corollary}\label{ch7_Coro_Without_D}
In Scenario \uppercase\expandafter{\romannumeral1}, the diversity orders of the $n$th UD for both discrete and continuous phase shifts under the NOMA scheme are $nm_sK$.
\end{corollary}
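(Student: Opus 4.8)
The plan is to invoke the standard definition of diversity order, $d = -\lim_{\rho\to\infty} \frac{\log \mathbb{P}_n^1}{\log \rho}$, and to squeeze the true outage probability between the two high-SNR bounds already supplied by Proposition \ref{ch7_Prop_Without_NOMA}. Since both bounds are in closed form, the corollary should follow almost mechanically once the $\rho$-dependence of $\tilde{\rho}_{n,max}$ is made explicit.

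First I would isolate how $\tilde{\rho}_{n,max}$ scales with $\rho$. Each threshold $\tilde{\rho}_{n\leftarrow l}$ has the form $\tilde{\gamma}_l / [\rho(\alpha_l - \tilde{\gamma}_l\sum_{i=l+1}^N\alpha_i)]$, or $\tilde{\gamma}_N/(\rho\alpha_N)$ for the last user, so each is a positive constant divided by $\rho$; the SIC-feasibility condition $\alpha_l - \tilde{\gamma}_l\sum_{i=l+1}^N\alpha_i>0$ guarantees these constants are positive and finite. Taking the maximum over $l$ preserves this structure, so I can write $\tilde{\rho}_{n,max} = c_n/\rho$ with $c_n>0$ independent of $\rho$.

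Substituting this into \eqref{ch7_eq_Without_1} and \eqref{ch7_eq_Without_3}, each bound becomes $\frac{N!}{(N-n)!n!}\xi_j^n c_n^{nm_sK}\rho^{-nm_sK}$ for $j\in\{1,2\}$, i.e. a positive constant times $\rho^{-nm_sK}$. Since $\mathbb{P}_n^{1,low,\infty}\le \mathbb{P}_n^1 \le \mathbb{P}_n^{1,up,\infty}$ in the high-SNR regime, I would then apply a squeeze argument: forming $-\log(\cdot)/\log\rho$ for each bound and letting $\rho\to\infty$, the multiplicative constants contribute $-\log(\text{const})/\log\rho\to 0$ and only the exponent $nm_sK$ survives. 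Hence both bounds yield diversity order $nm_sK$, and the true outage probability inherits the same order. Because the lower bound \eqref{ch7_eq_Without_3} is exactly the continuous-phase-shift outage probability, this simultaneously settles the claim for both discrete and continuous phase shifts.

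There is essentially no deep step here: the corollary is a direct consequence of Proposition \ref{ch7_Prop_Without_NOMA}. The only point requiring care, and the place where the argument could go wrong, is verifying that $\tilde{\rho}_{n,max}$ is exactly proportional to $\rho^{-1}$ rather than some other power of $\rho$, since a different scaling would alter the exponent; this is precisely where the explicit form of the NOMA thresholds and the feasibility constraint on the power-allocation coefficients $\alpha_l$ must be used.
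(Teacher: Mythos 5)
Your proposal is correct and is essentially the paper's own argument: the paper's proof consists precisely of noting that $\tilde{\rho}_{n,max}\propto \rho^{-1}$ and reading off the exponent $nm_sK$ from the asymptotic bounds \eqref{ch7_eq_Without_1} and \eqref{ch7_eq_Without_3}. Your version simply makes explicit the details the paper leaves implicit --- the positivity of the constants via the SIC-feasibility condition and the squeeze/limit step in the diversity-order definition --- which is a fair expansion, not a different route.
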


\begin{proof}
Since $\tilde{\rho}_{n,max}\propto \rho^{-1}$, the diversity orders can be obtained based on \eqref{ch7_eq_Without_1} and \eqref{ch7_eq_Without_3}. This completes the proof.
\end{proof}

\begin{remark}
For Scenario \uppercase\expandafter{\romannumeral1}, the proposed system with $b$-bit discrete phase shifts achieves the same diversity order as that with continuous phase shifts when $b\ge2$, which implies that the application of discrete phase shifts does not jeopardize the diversity order.
The diversity order is related to the number of IRS reflecting elements, the Nakagami fading parameters of the BS-IRS-UD link, and the order of the channel gain.
\end{remark}

\subsection{Performance Analysis of OMA}
The comparison between NOMA and OMA has been well investigated in existing works, such as \cite{ding2014performance,ding2016impact,oviedo2017fair,xu2015new,liu2016cooperative}.
Hence, we provide the results of OMA as a special case of NOMA.
We assume that all UDs occupy a resource block under the OMA scheme for fair comparisons, and each UD is allocated $1/N$ resource block \cite{ding2014performance,oviedo2017fair,xu2015new}.
Since there is no ordering for the channel gains in the OMA scheme, all UDs have the same outage probability, which is given by
\begin{equation}
\begin{split}
\mathbb{P}_{OMA}^1=\mathrm{P_r}(\Psi_{OMA}^1 < \tilde{\gamma}_{OMA}),
\end{split}
\end{equation}
where $\Psi_{OMA}^1=\rho|\mathbf{G}_n \mathbf{\Theta}_n \mathbf{g}_n|^2$ and $\tilde{\gamma}_{OMA}=2^{N\tilde{R}}-1$ with $\tilde{R}$ being the target rate of each UD.
Furthermore, $\mathbb{P}_{OMA}^1$ can be transformed into
\begin{equation}
\begin{split}
\mathbb{P}_{OMA}^1&=\mathrm{P_r}\left(Y^2 < \frac{\tilde{\gamma}_{OMA}}{\rho}\right),
\end{split}
\end{equation}
where $Y$ is the unordered equivalent channel gain.

\begin{proposition}
Under the OMA scheme in Scenario \uppercase\expandafter{\romannumeral1}, when $b\ge 2$, the high-SNR asymptotic expressions for the upper and lower bounds of U\textsubscript{$n$}'s outage probability are as follows:
\begin{equation}\label{ch7_eq_Without_2}
\begin{split}
\mathbb{P}_{OMA}^{1,up,\infty}& = \xi_1 \tilde{\gamma}_{OMA}^{m_sK}\rho^{-m_sK},
\end{split}
\end{equation}
\begin{equation}\label{ch7_eq_Without_4}
\begin{split}
\mathbb{P}_{OMA}^{1,low,\infty}=\xi_2 \tilde{\gamma}_{OMA}^{m_sK}\rho^{-m_sK}.
\end{split}
\end{equation}
Note that \eqref{ch7_eq_Without_4} is also the outage probability for the case of continuous shifts.
\end{proposition}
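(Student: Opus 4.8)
The plan is to observe that, unlike the NOMA scheme, the OMA outage event involves the \emph{unordered} equivalent channel gain $Y$, so no order statistics are required; the whole argument collapses to evaluating the unordered CDF of $Y$ at a vanishing threshold. The crucial input is already contained in Lemma~\ref{ch7_Lemm_Without}: the ordered CDF stated there arises by applying the order-statistics integral $\frac{N!}{(N-n)!(n-1)!}\int_0^{F_Y(y)} t^{n-1}(1-t)^{N-n}\,dt$ to the unordered CDF $F_Y(y)$, so the per-user (unordered) CDF near zero can be read off directly as the $N=1$, $n=1$ specialization of \eqref{ch7_eq_Lemma_Without_1} and \eqref{ch7_eq_Lemma_Without_2}.

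First I would extract the unordered statistics. Setting $N=1$, $n=1$ in \eqref{ch7_eq_Lemma_Without_1} collapses the binomial sum to its single $i=0$ term and leaves the leading-order behavior $\xi_1\,y^{2m_sK}$ for the discrete-phase (worst-case) bound, while the same specialization of \eqref{ch7_eq_Lemma_Without_2} yields $\xi_2\,y^{2m_sK}$ for the continuous-phase case. Equivalently, these are simply the lowest-order terms of the derivation underlying Lemma~\ref{ch7_Lemm_Without}, before the order-statistics step is applied.

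Next I would substitute the OMA threshold. Since $\mathbb{P}_{OMA}^1=\mathrm{P_r}\!\left(Y^2<\tilde{\gamma}_{OMA}/\rho\right)=F_Y\!\left(\sqrt{\tilde{\gamma}_{OMA}/\rho}\right)$ and $\tilde{\gamma}_{OMA}/\rho\to0$ as $\rho\to\infty$, the asymptotic CDF applies with $y=\sqrt{\tilde{\gamma}_{OMA}/\rho}$, which converts the exponent $2m_sK$ in $y$ into $m_sK$ in $\tilde{\gamma}_{OMA}/\rho$. Plugging in gives $\mathbb{P}_{OMA}^{1,up,\infty}=\xi_1\big(\tilde{\gamma}_{OMA}/\rho\big)^{m_sK}=\xi_1\,\tilde{\gamma}_{OMA}^{m_sK}\rho^{-m_sK}$, which is \eqref{ch7_eq_Without_2}, and the identical step with $\xi_2$ in place of $\xi_1$ yields \eqref{ch7_eq_Without_4}. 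The upper/lower labeling is inherited from Lemma~\ref{ch7_Lemm_Without}: because $a=\beta\cos(\pi/2^b)\le\beta$ we have $\xi_1\ge\xi_2$, so the $\xi_1$ expression upper-bounds the outage while the continuous-phase $\xi_2$ expression is simultaneously the lower bound and the exact continuous-phase result.

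I do not expect a genuine obstacle, since the statement is essentially a corollary of Lemma~\ref{ch7_Lemm_Without}; the only point requiring care is bookkeeping rather than analysis, namely confirming that the square-root substitution and the dropping of the (here absent, since $N=1$) higher-order terms are justified in the $\rho\to\infty$ limit. A useful sanity check is that the resulting diversity order $m_sK$ coincides with that of the $n=1$ NOMA user in Corollary~\ref{ch7_Coro_Without_D}, as expected since OMA removes the ordering gain.
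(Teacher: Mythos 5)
Your proposal is correct and follows essentially the same route as the paper: the paper's proof simply cites the unordered CDF $F_{Y,low}^{0^+}(y)=\xi_1 y^{2m_sK}$ (equation \eqref{ch7_eq_Appe_Without_4} in Appendix~\ref{ch7_Appe_Lemm_Without}) and then substitutes $y=\sqrt{\tilde{\gamma}_{OMA}/\rho}$ exactly as in the NOMA proposition, which is what you do. Your recovery of the unordered CDF by specializing the ordered result of Lemma~\ref{ch7_Lemm_Without} to $N=1$, $n=1$ is a trivially equivalent way of reaching the same intermediate quantity, so there is no substantive difference.
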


\begin{proof}
The CDF of $Y$'s lower bound for $y\rightarrow0^+$ has been derived as \eqref{ch7_eq_Appe_Without_4} in the proof of Lemma \ref{ch7_Lemm_Without}. Other steps are similar to the proof of Proposition \ref{ch7_Prop_Without_NOMA}.
\end{proof}

\begin{corollary}\label{ch7_Coro_Without_D_OMA}
In Scenario \uppercase\expandafter{\romannumeral1}, the diversity orders of each UD for both discrete and continuous phase shifts under the OMA scheme are $m_sK$.
\end{corollary}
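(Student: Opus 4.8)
The plan is to read the diversity order directly off the high-SNR bounds in the preceding proposition, mirroring the argument used for NOMA in Corollary \ref{ch7_Coro_Without_D}. Recall that the diversity order is defined as $d = -\lim_{\rho\to\infty} \log \mathbb{P}_{OMA}^1 / \log \rho$, so it suffices to identify the exponent of $\rho$ in the dominant high-SNR term of the outage probability.

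First I would observe that in both \eqref{ch7_eq_Without_2} and \eqref{ch7_eq_Without_4} the sole $\rho$-dependent factor is $\rho^{-m_sK}$: the prefactors $\xi_1$, $\xi_2$, and $\tilde{\gamma}_{OMA}=2^{N\tilde{R}}-1$ depend only on the fading parameters (through $m_s=\min\{m_G,m_g\}$), the number of reflecting elements $K$, the resolution $b$, and the target rate $\tilde{R}$, none of which scale with $\rho$. This is immediate from the definitions of $\xi_1$ and $\xi_2$ in Lemma \ref{ch7_Lemm_Without}. Substituting either bound into the defining limit therefore yields $d=m_sK$.

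Next, since the true OMA outage probability lies between its lower bound $\mathbb{P}_{OMA}^{1,low,\infty}$ and its upper bound $\mathbb{P}_{OMA}^{1,up,\infty}$, and both bounds carry the same exponent $-m_sK$, a squeeze argument on the defining limit forces the true diversity order to equal $m_sK$. Because the continuous-phase-shift case coincides with the lower bound (as remarked after \eqref{ch7_eq_Without_4}), the identical exponent governs both the discrete (for $b\ge 2$) and the continuous settings, so discrete phase shifts incur no diversity loss under OMA either.

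I do not anticipate a genuine obstacle here: all the heavy lifting has already been done in Lemma \ref{ch7_Lemm_Without} and the preceding proposition, where the Laplace-transform-based channel statistics and the extraction of the lowest-order term were carried out. The only point meriting a moment's care is confirming that the prefactors are truly $\rho$-independent; once that is in hand, the sandwich argument converts the shared exponent directly into the stated diversity order $m_sK$.
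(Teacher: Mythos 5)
Your proposal is correct and follows essentially the same route as the paper: the paper's (implicit) proof, mirroring Corollary \ref{ch7_Coro_Without_D}, simply reads the diversity order off the exponent $\rho^{-m_sK}$ appearing in the high-SNR bounds \eqref{ch7_eq_Without_2} and \eqref{ch7_eq_Without_4}, exactly as you do. Your added care in checking that $\xi_1$, $\xi_2$, and $\tilde{\gamma}_{OMA}$ are $\rho$-independent and in making the sandwich argument explicit is sound but does not constitute a different approach.
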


\begin{remark}
For the OMA case of Scenario \uppercase\expandafter{\romannumeral1}, the diversity order with discrete phase shifts is the same as that with continuous phase shifts.
The diversity order is related to the number of IRS reflecting elements and the Nakagami fading parameters of the BS-IRS-UD link.
\end{remark}

\section{Scenario \uppercase\expandafter{\romannumeral2} (With Direct Link)}\label{ch7_s2}
In this section, we will analyze the outage performance of Scenario \uppercase\expandafter{\romannumeral2}.
To derive the outage probability, the channel statistics are required.
As compared with Scenario \uppercase\expandafter{\romannumeral1}, Scenario \uppercase\expandafter{\romannumeral2} is more complicated to analyze due to the existence of the direct link.
Specifically, the equivalent channel gain is the sum of direct and reflected components that do not follow the same distribution, which increases the difficulty of analysis.

\subsection{IRS Parameters for Scenario \uppercase\expandafter{\romannumeral2}}
When there is a direct link between the BS and each UD, we need to maximize the equivalent channel gain between the BS and U\textsubscript{$n$}, i.e., $|h_n + \mathbf{G}_n \mathbf{\Theta}_n \mathbf{g}_n| =\left|h_n + \sum_{k=1}^{K}\beta_k^n G_k^n g_k^n e^{j\theta_k^n}\right|$.
The optimal discrete phase-shift variables are derived below.
\begin{lemma}
Denote the optimal discrete phase-shift variable by $\hat{\theta}_k^{2,n}$. Then, we have
\begin{equation}\label{ch7_eq_Lemma_Link_With}
\begin{split}
\hat{\theta}_k^{2,n} = \Delta \left( \left\lfloor \frac{\bar{\theta}_k^{2,n}}{\Delta} \right\rfloor+\frac{1}{2}\right), \quad k = 1, 2, \cdots, K,
\end{split}
\end{equation}
where $\bar{\theta}_k^{2,n}=\mathrm{arg}(h_n)-\mathrm{arg}(G_k^n g_k^n)$.
\end{lemma}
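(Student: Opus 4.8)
The plan is to mirror the proof of the corresponding lemma for Scenario~\uppercase\expandafter{\romannumeral1}, with the essential difference that the direct-link term $h_n$ now fixes the reference phase that every reflected contribution must match, instead of leaving it a free constant $\tilde{\theta}_n$. First I would treat the continuous-phase-shift version of maximizing $\left|h_n + \sum_{k=1}^{K}\beta_k^n G_k^n g_k^n e^{j\theta_k^n}\right|$. Writing $c_k = \beta_k^n G_k^n g_k^n$, the triangle inequality gives $\left|h_n + \sum_k c_k e^{j\theta_k^n}\right| \le |h_n| + \sum_k |c_k|$, with equality precisely when each term $c_k e^{j\theta_k^n}$ is co-directional with $h_n$, i.e.\ $\mathrm{arg}(c_k) + \theta_k^n = \mathrm{arg}(h_n)$. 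Since the real factor $\beta_k^n>0$ contributes no phase, this forces $\theta_k^n = \mathrm{arg}(h_n) - \mathrm{arg}(G_k^n g_k^n) = \bar{\theta}_k^{2,n}$, which identifies the optimal continuous phase shift.

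Next I would pass to the discrete set $\mathcal{D} = \left\{\frac{\Delta}{2}, \frac{3\Delta}{2}, \cdots, \frac{(2L-1)\Delta}{2}\right\}$, whose points are spaced by $\Delta$ and located at the midpoints $\Delta\left(m+\frac{1}{2}\right)$, $m=0,1,\cdots,L-1$. Element-wise, the grid point nearest to $\bar{\theta}_k^{2,n}$ is obtained by taking the integer $m = \lfloor \bar{\theta}_k^{2,n}/\Delta \rfloor$: writing $\bar{\theta}_k^{2,n}/\Delta = m + f$ with $f \in [0,1)$ and comparing the distances to $\Delta\left(m-\frac{1}{2}\right)$, $\Delta\left(m+\frac{1}{2}\right)$, and $\Delta\left(m+\frac{3}{2}\right)$ confirms that $\Delta\left(m+\frac{1}{2}\right)$ is the closest, yielding exactly $\hat{\theta}_k^{2,n} = \Delta\left(\lfloor \bar{\theta}_k^{2,n}/\Delta \rfloor + \frac{1}{2}\right)$ and a quantization error $\hat{\theta}_k^{2,n} - \bar{\theta}_k^{2,n}$ confined to $[-\Delta/2, \Delta/2)$, just as for \eqref{ch7_eq_Lemma_Link_Without}.

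The step I expect to be the main obstacle is rigorously justifying that this element-wise rounding is optimal, since the magnitude is a non-separable function of the phases and greedy per-element rounding need not maximize it in general. To handle this I would not argue about the magnitude directly, but about its projection onto the direct-link direction, $\mathrm{Re}\big[e^{-j\mathrm{arg}(h_n)}(h_n + \sum_k c_k e^{j\theta_k^n})\big] = |h_n| + \sum_k |c_k|\cos(\theta_k^n - \bar{\theta}_k^{2,n})$. This quantity lower-bounds the magnitude and, crucially, decouples across $k$, so over $\mathcal{D}$ it is maximized term by term by the phase closest to $\bar{\theta}_k^{2,n}$, which is exactly the rounding above. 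Because the resulting error is bounded by $\Delta/2 = \pi/2^b$, every reflected term remains within the same half-plane as $h_n$, so this alignment criterion—standard in the IRS literature and adopted here—pins down the optimal discrete configuration, completing the proof.
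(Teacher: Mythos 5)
Your proposal is correct and follows the same two-step route as the paper's own proof: first obtain the continuous optimum by aligning every reflected term with $\mathrm{arg}(h_n)$ (the paper notes, as you do, that unlike Scenario~\uppercase\expandafter{\romannumeral1} the reference phase is now fixed and the solution is unique), then round each $\bar{\theta}_k^{2,n}$ to the nearest point of $\mathcal{D}$, which is exactly $\Delta\left(\left\lfloor \bar{\theta}_k^{2,n}/\Delta \right\rfloor+\frac{1}{2}\right)$ with error in $\left[-\frac{\Delta}{2},\frac{\Delta}{2}\right)$. Where you go beyond the paper is the justification of the rounding step: the paper simply asserts that adjusting each element ``to be close to'' the continuous optimum yields the optimal discrete configuration, whereas you prove that per-element nearest rounding exactly maximizes the decoupled surrogate $|h_n|+\beta\sum_{k}|G_k^n||g_k^n|\cos\left(\theta_k^n-\bar{\theta}_k^{2,n}\right)$ over $\mathcal{D}^K$, which lower-bounds the magnitude. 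That is a genuine gain in rigor over the published argument.

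Your caution about the non-separability is also well placed, and it exposes a gap that the paper's proof shares: maximizing a lower bound does not maximize the magnitude itself, and per-element rounding can in fact be strictly suboptimal for $\left|h_n+\beta\sum_k G_k^ng_k^ne^{j\theta_k^n}\right|$. For instance, take $K=2$, $b=2$ (so $\Delta=\pi/2$), $|h_n|=\epsilon$ small, unit-magnitude reflected channels with $\mathrm{arg}(G_1^ng_1^n)=-\delta$ and $\mathrm{arg}(G_2^ng_2^n)=+\delta$ for tiny $\delta$: nearest rounding places the two reflected terms at relative phases $+\pi/4$ and $-\pi/4$, giving magnitude $\approx \epsilon+\sqrt{2}\beta$, while choosing both discrete phases equal to $\pi/4$ gives $\approx 2\beta$, which is larger once $\epsilon<(2-\sqrt{2})\beta$. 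So ``optimal'' in the lemma can only be read as optimal under the alignment (projection) criterion, precisely as you framed it; neither your argument nor the paper's establishes more, because more is not true. This does no harm downstream: the subsequent channel statistics and outage bounds (Lemma~\ref{ch7_Lemm_With} and the propositions built on it) use only the property $\left|\theta_k^{2,n,er}\right|\le \Delta/2$, equivalently $\cos\left(\theta_k^{2,n,er}\right)\ge\cos\left(\pi/2^b\right)$, which your rounding step does establish.
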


\begin{proof}
The optimal continuous phase shifts can be obtained by setting the phases of all $G_k^n g_k^n e^{j\theta_k^n}$ to be the same as $h_n$.
For continuous phase shifts, there is only one solution that is given by $\bar{\theta}_k^{2,n}=\mathrm{arg}(h_n)-\mathrm{arg}(G_k^n g_k^n)$.
After considering discrete phase shifts, we can adjust the phase-shift variable of each element to be close to the optimal continuous phase-shift variable and obtain the optimal discrete phase-shift variable as \eqref{ch7_eq_Lemma_Link_With}. This completes the proof.
\end{proof}

After adopting the optimal continuous phase shifts $\big\{\bar{\theta}_k^{2,n}\big\}$, we have $|h_n + \mathbf{G}_n \mathbf{\Theta}_n \mathbf{g}_n| = |h_n| + \beta \sum_{k=1}^{K}|G_k^n||g_k^n|$.
For discrete phase shifts, the quantization error is given by $\theta_k^{2,n,er}=\hat{\theta}_k^{2,n}-\bar{\theta}_k^{2,n}$, which is uniformly distributed in $\big[-\frac{\Delta}{2}, \frac{\Delta}{2}\big)$.
As such, after adopting the optimal discrete phase-shift variables $\big\{\hat{\theta}_k^{2,n}\big\}$, we can obtain the equivalent channel gain as
\begin{equation}
\begin{split}
|h_n \!+\! \mathbf{G}_n \mathbf{\Theta}_n \mathbf{g}_n| &\!=\! \left| h_n \!+\! \beta \sum_{k=1}^{K} G_k^n g_k^n e^{j\hat{\theta}_k^{2,n}}\right|
\!=\! \left| |h_n|e^{j\mathrm{arg}(h_n)} \!+\! \beta \sum_{k=1}^{K} |G_k^n| |g_k^n| e^{j\mathrm{arg}(G_k^n g_k^n)} e^{j\bar{\theta}_k^{2,n}}e^{j\theta_k^{2,n,er}}\right|\\
&=\left| |h_n| + \beta \sum_{k=1}^{K} |G_k^n| |g_k^n| e^{j\theta_k^{2,n,er}}\right|.
\end{split}
\end{equation}
Due to the introduction of direct link, Scenario \uppercase\expandafter{\romannumeral2} has a better channel quality than Scenario \uppercase\expandafter{\romannumeral1}, which will result in better performance.

\begin{figure}
\begin{center}
\includegraphics[width=0.75\linewidth]{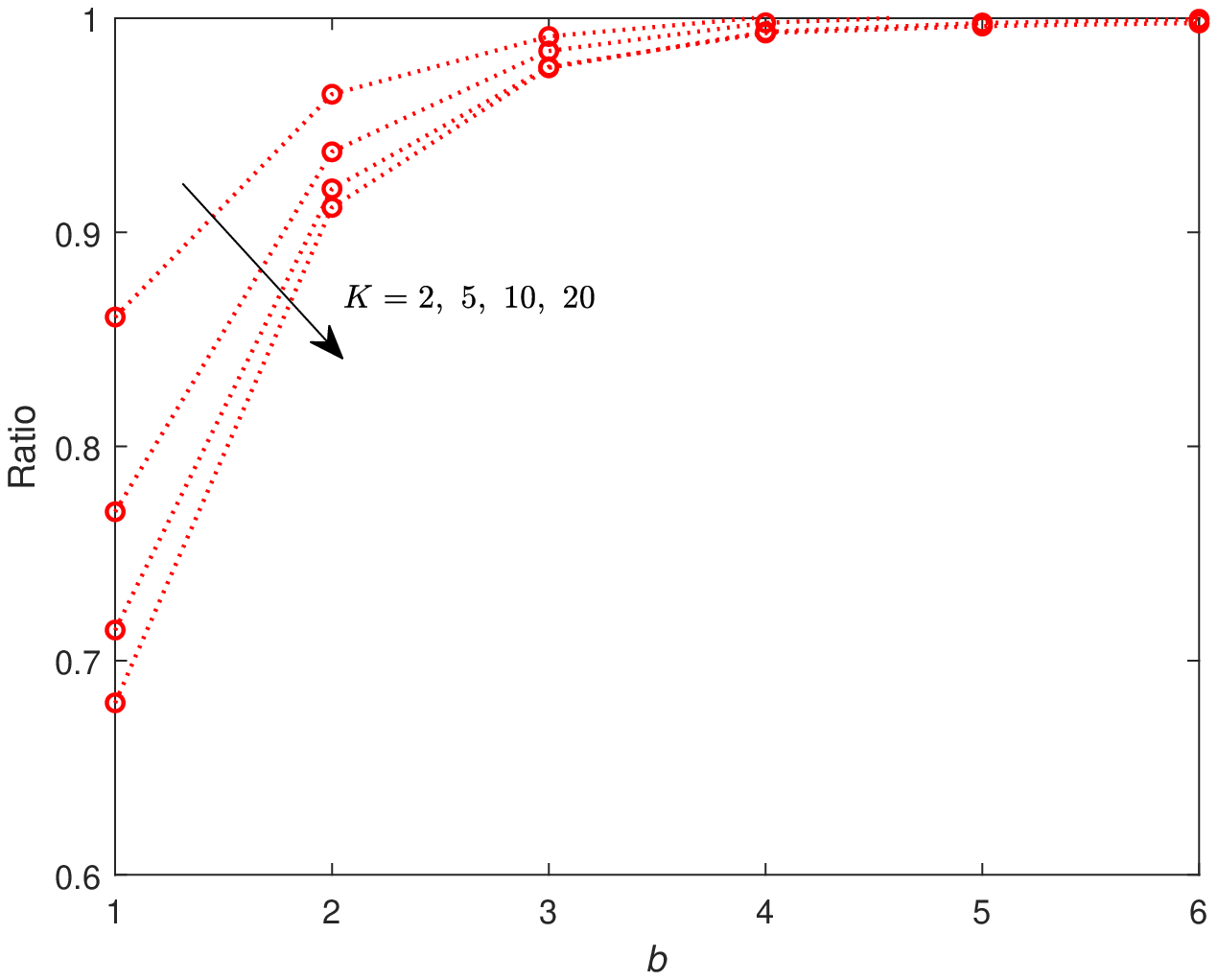}
\end{center}
\caption{The ratio of the average equivalent channel gain with discrete phase shifts to that with continuous phase shifts when $m_G=2$, $m_g=1$, $m_h=1$, and $\beta=0.9$ in Scenario \uppercase\expandafter{\romannumeral2}.}
\label{ch7_Fig_channel_gain_vs_b_with}
\end{figure}

To verify our analysis of the optimal discrete phase shifts, we have conducted Monte Carlo simulations, and the results are shown in Fig. \ref{ch7_Fig_channel_gain_vs_b_with}.
It is observed that the ratio of the equivalent channel gain with discrete phase shifts to that with continuous phase shifts increases as the value of $b$ increases, and gradually approaches $1$.

\subsection{Channel Statistics with Direct Link}
For Scenario \uppercase\expandafter{\romannumeral2}, the new channel statistics for channel gain near $0$ are shown in the following lemma.

\begin{lemma}\label{ch7_Lemm_With}
Denote that $Z_n=\left| |h_n| + \beta \sum_{k=1}^{K} |G_k^n| |g_k^n| e^{j\theta_k^{2,n,er}}\right|$ which is the $n$th channel gain in the ascending order. Based on order statistics, when $m_G\neq m_g$ and $b\ge 2$, the CDF of $Z_n$'s lower bound for $z\rightarrow0^+$ is given by
\begin{equation}\label{ch7_eq_Lemma_With_1}
\begin{split}
F_{Z_n,low}^{0^+}(z) = &\frac{N!}{(N-n)!(n-1)!}\sum_{i=0}^{N-n} \binom{N-n}{i}\frac{(-1)^i}{n+i}
\xi_3^{n+i}z^{(2m_h+2m_sK)(n+i)},
\end{split}
\end{equation}
where $\xi_3=\zeta_2/a^{2m_sK}$ and
$\zeta_2=\frac{(2m_h)^{m_h} \Gamma(2m_h) \big(\sqrt{\pi}4^{m_s-m_l+1}(m_sm_l)^{m_s}\Gamma(2m_s)\Gamma(2m_l-2m_s)\big)^K}
{(2m_h+2m_sK)\Gamma(m_h)\Gamma(2m_h+2m_sK) \big(\Gamma(m_s)\Gamma(m_l)\Gamma\left(m_l-m_s+\frac{1}{2}\right)\big)^K}$.

Note that when $b\rightarrow \infty$, it is the case of continuous phase shifts. The CDF of $Z_n$ for $z\rightarrow0^+$ when $b\rightarrow \infty$ is given by
\begin{equation}\label{ch7_eq_Lemma_With_2}
\begin{split}
F_{Z_n}^{0^+,b\rightarrow \infty}(z) = &\frac{N!}{(N-n)!(n-1)!}\sum_{i=0}^{N-n} \binom{N-n}{i}
\frac{(-1)^i}{n+i}\xi_4^{n+i}z^{(2m_h+2m_sK)(n+i)},
\end{split}
\end{equation}
where $\xi_4=\zeta_2/\beta^{2m_sK}$.
\end{lemma}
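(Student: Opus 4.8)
The plan is to follow the same route as Lemma~\ref{ch7_Lemm_Without}, with the extra ingredient that the direct link $|h_n|$ is convolved into the reflected contribution. First I would replace the exact gain $Z_n$ by a tractable lower bound on the channel gain (which yields an upper bound on the outage probability, exactly as in Proposition~\ref{ch7_Prop_Without_NOMA}). Projecting the complex sum onto the real axis aligned with $h_n$ and using $\cos(\theta_k^{2,n,er}) \ge \cos(\Delta/2) = \cos(\pi/2^b)$ for every quantization error $\theta_k^{2,n,er} \in [-\Delta/2, \Delta/2)$, I obtain $Z_n \ge Z_{n,low} := |h_n| + a\sum_{k=1}^{K}|G_k^n||g_k^n|$ with $a = \beta\cos(\pi/2^b)$; the hypothesis $b \ge 2$ guarantees $a>0$ so the bound is non-degenerate, and letting $b\to\infty$ sends $a\to\beta$, which is precisely the continuous-phase case that will produce $\xi_4$. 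The problem thus reduces to finding the small-argument behaviour of the \emph{unordered} CDF of $Z_{low}=|h|+aW$, where $W=\sum_{k=1}^{K}|G_k||g_k|$, and then lifting it to the $n$th order statistic.

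For the unordered statistics near $0$ I would use the Laplace transform, as in the appendix of Lemma~\ref{ch7_Lemm_Without}. The single product $X_k=|G_k^n||g_k^n|$ of two independent Nakagami magnitudes has a PDF that behaves like $c_X x^{2m_s-1}$ as $x\to 0^+$; the assumption $m_G\neq m_g$ is what makes this a clean power law (equal parameters would introduce a logarithmic correction), and the coefficient $c_X$ is exactly the quantity already extracted in the proof of Lemma~\ref{ch7_Lemm_Without}, so that $(c_X\Gamma(2m_s))^K$ equals the $K$th power of $\sqrt{\pi}4^{m_s-m_l+1}(m_sm_l)^{m_s}\Gamma(2m_s)\Gamma(2m_l-2m_s)/(\Gamma(m_s)\Gamma(m_l)\Gamma(m_l-m_s+\tfrac12))$. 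Because $Z_{low}$ is a sum of independent terms, its Laplace transform factorizes as $\mathcal{L}_{|h|}(s)\,\mathcal{L}_X(as)^K$; matching the large-$s$ tails $\mathcal{L}_{|h|}(s)\sim c_h\Gamma(2m_h)s^{-2m_h}$ and $\mathcal{L}_X(as)^K\sim (c_X\Gamma(2m_s))^K a^{-2m_sK}s^{-2m_sK}$ gives a combined decay $s^{-(2m_h+2m_sK)}$, and inverting the leading term yields $f_{Z_{low}}(z)\sim \kappa\,z^{2m_h+2m_sK-1}$ for a constant $\kappa$. Integrating once produces the unordered CDF $F_{Z_{low}}(z)\sim \xi_3 z^{2m_h+2m_sK}$, with $\xi_3=\zeta_2/a^{2m_sK}$ and the stated $\zeta_2$.

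The final step is order statistics. Writing the $n$th ascending order statistic of $N$ i.i.d.\ gains as $\frac{N!}{(N-n)!(n-1)!}\int_{0}^{F(z)} t^{n-1}(1-t)^{N-n}\,dt$, binomial-expanding $(1-t)^{N-n}$, and integrating term by term reproduces the prefactor and the sum $\sum_{i=0}^{N-n}\binom{N-n}{i}\frac{(-1)^i}{n+i}[F(z)]^{n+i}$; substituting the leading behaviour $F(z)\sim\xi_3 z^{2m_h+2m_sK}$ and keeping the exponent $(2m_h+2m_sK)(n+i)$ gives \eqref{ch7_eq_Lemma_With_1}, while repeating with $a\to\beta$ (so $\xi_3\to\xi_4=\zeta_2/\beta^{2m_sK}$) gives \eqref{ch7_eq_Lemma_With_2}. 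I expect the main obstacle to be the constant bookkeeping in the second step: one must justify that the cross terms and higher-order contributions in the Laplace inversion are genuinely negligible as $z\to0^+$, and then assemble the product-of-Nakagami coefficient $c_X$, the Nakagami-$m_h$ coefficient $c_h$, and the Beta-function factor $\Gamma(2m_h)\Gamma(2m_sK)/\Gamma(2m_h+2m_sK)$ arising from the convolution so that they collapse exactly into $\zeta_2$ (in particular the $\Gamma(2m_sK)$ inherited from the reflected part must cancel against the one introduced by the convolution normalization). The order-statistics manipulation, by contrast, is routine once the unordered tail is in hand.
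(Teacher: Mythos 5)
Your proposal takes essentially the same route as the paper's own proof: the identical lower bound $Z_{n,low}=|h_n|+a\sum_{k=1}^{K}|G_k^n||g_k^n|$ with $a=\beta\cos\left(\pi/2^b\right)$, the same factorization of the Laplace transform of the sum into the direct-link factor times the $K$th power of the scaled product factor, inversion of the leading $s^{-(2m_h+2m_sK)}$ tail, and the same order-statistics lift. The only technical difference is that you extract the large-$s$ tails by Watson's-lemma matching on the small-argument PDFs, whereas the paper computes the exact transforms (modified Bessel and parabolic cylinder representations) and then expands; for the reflected part these are equivalent, and your identity $c_X\Gamma(2m_s)=\phi_2$ does check out via the Legendre duplication formula.

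The step you deferred, however --- ``the constants collapse exactly into $\zeta_2$'' --- is precisely where the argument does \emph{not} close as stated, and you should see why. Your route gives the direct-link tail coefficient $c_h\Gamma(2m_h)=\frac{2m_h^{m_h}\Gamma(2m_h)}{\Gamma(m_h)}$; a sanity check at $m_h=2$ makes this unambiguous, since the PDF $8x^3e^{-2x^2}\approx 8x^3$ near zero yields the Laplace tail $8\,\Gamma(4)s^{-4}=48\,s^{-4}$. The paper's $\zeta_2$ instead carries $\frac{(2m_h)^{m_h}\Gamma(2m_h)}{\Gamma(m_h)}$, which equals $96\,s^{-4}$ in that check: in the paper's appendix the prefactor $2m_h^{m_h}(2m_h)^{-m_h}=2^{1-m_h}$ produced by Gradshteyn--Ryzhik eq.\ (3.462.1) was dropped, so the stated constant is inflated by $2^{m_h-1}$, and the two agree only when $m_h=1$ (the value used in all of the paper's simulations, which is why the figures cannot detect the difference). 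Completing your bookkeeping --- including the convolution identity $\int_0^z x^{2m_h-1}(z-x)^{2m_sK-1}dx=\frac{\Gamma(2m_h)\Gamma(2m_sK)}{\Gamma(2m_h+2m_sK)}\,z^{2m_h+2m_sK-1}$, in which the $\Gamma(2m_sK)$ cancellation you anticipate does occur --- therefore yields the lemma with $2m_h^{m_h}$ in place of $(2m_h)^{m_h}$ inside $\zeta_2$, i.e.\ your method is sound but lands on a corrected constant rather than the one printed in the statement. The exponent $(2m_h+2m_sK)(n+i)$, and hence every diversity-order conclusion drawn from this lemma, is unaffected.
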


\begin{proof}
See Appendix \ref{ch7_Appe_Lemm_With}.
\end{proof}

\subsection{Performance Analysis of NOMA}
In Scenario \uppercase\expandafter{\romannumeral2}, the outage probability of U\textsubscript{$n$} under the NOMA scheme is given by
\begin{equation}
\begin{split}
\mathbb{P}_n^2=1-\mathrm{P_r}(\Psi_{n\leftarrow 1}^2 \ge \tilde{\gamma}_1, \Psi_{n\leftarrow 2}^2 \ge \tilde{\gamma}_2, \cdots, \Psi_{n\leftarrow n}^2 \ge \tilde{\gamma}_n).
\end{split}
\end{equation}
Furthermore, $\mathbb{P}_n^2$ can be transformed into
\begin{equation}
\begin{split}
\mathbb{P}_n^2&=1-\mathrm{P_r}\left(Z_n^2 \ge \tilde{\rho}_{n\leftarrow 1}, Z_n^2 \ge \tilde{\rho}_{n\leftarrow 2}, \cdots, Z_n^2 \ge \tilde{\rho}_{n\leftarrow n}\right)\\
&=\mathrm{P_r}\left(Z_n^2 < \tilde{\rho}_{n,max}\right).
\end{split}
\end{equation}
Based on Lemma \ref{ch7_Lemm_With}, we can characterize the outage performance in the high-SNR regime as shown in the following proposition.

\begin{proposition}
Under the NOMA scheme in Scenario \uppercase\expandafter{\romannumeral2}, when $b\ge 2$, the high-SNR asymptotic expressions for the upper and lower bounds of U\textsubscript{$n$}'s outage probability are as follows:
\begin{equation}\label{ch7_eq_With_1}
\begin{split}
\mathbb{P}_n^{2,up,\infty}& = \frac{N!}{(N-n)!n!}\xi_3^{n}\tilde{\rho}_{n,max}^{n(m_h+m_sK)},
\end{split}
\end{equation}
\begin{equation}\label{ch7_eq_With_3}
\begin{split}
\mathbb{P}_n^{2,low,\infty}&=\frac{N!}{(N-n)!n!}\xi_4^{n} \tilde{\rho}_{n,max}^{n(m_h+m_sK)}.
\end{split}
\end{equation}
Note that \eqref{ch7_eq_With_3} is also the outage probability for the case of continuous shifts.
\end{proposition}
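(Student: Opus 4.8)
The plan is to follow the proof of Proposition~\ref{ch7_Prop_Without_NOMA} essentially line for line, substituting the direct-link channel statistics of Lemma~\ref{ch7_Lemm_With} for those of Lemma~\ref{ch7_Lemm_Without}. The outage probability has already been reduced to $\mathbb{P}_n^2=\mathrm{P_r}\!\left(Z_n^2<\tilde{\rho}_{n,max}\right)$, so the upper bound follows immediately by evaluating the near-zero lower-bound CDF of $Z_n$ at the threshold,
\[
\mathbb{P}_n^{2,up,\infty}=F_{Z_n,low}^{0^+}\!\left(\sqrt{\tilde{\rho}_{n,max}}\right).
\]
Substituting \eqref{ch7_eq_Lemma_With_1} and setting $z=\sqrt{\tilde{\rho}_{n,max}}$ so that $z^{(2m_h+2m_sK)(n+i)}=\tilde{\rho}_{n,max}^{(m_h+m_sK)(n+i)}$ turns the CDF into a finite sum indexed by $i$.

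First I would isolate the dominant term in the high-SNR regime. Since $\tilde{\rho}_{n,max}\propto\rho^{-1}\to 0$, the summand with the smallest exponent $(m_h+m_sK)(n+i)$ dominates, namely the $i=0$ term. Extracting it collapses the prefactor $\frac{N!}{(N-n)!(n-1)!}\cdot\frac{1}{n}$ to $\frac{N!}{(N-n)!n!}$ and leaves $\xi_3^{n}\,\tilde{\rho}_{n,max}^{n(m_h+m_sK)}$, which is exactly \eqref{ch7_eq_With_1}. For the lower bound I would repeat the argument with the continuous-phase-shift CDF \eqref{ch7_eq_Lemma_With_2}, evaluating $F_{Z_n}^{0^+,b\to\infty}\!\left(\sqrt{\tilde{\rho}_{n,max}}\right)$ and again keeping only the $i=0$ term; the single change is $\xi_3\mapsto\xi_4$, giving \eqref{ch7_eq_With_3}. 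Because $b\to\infty$ is precisely the continuous-shift limit, this same expression serves as the exact outage probability under continuous phase shifts.

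The main obstacle is conceptual rather than computational: the entire argument rests on having the correct near-zero CDF of the ordered gain $Z_n$ from Lemma~\ref{ch7_Lemm_With}, whose derivation via the Laplace-transform route to the ordered statistics of a sum of a direct component and $K$ reflected components is the genuinely hard step and is deferred to Appendix~\ref{ch7_Appe_Lemm_With}. Granted that lemma, the remaining work is bookkeeping: correctly halving the exponent under the square-root substitution and confirming that the $i=0$ term is truly lowest order. As a sanity check I would verify that the direct-link fading parameter $m_h$ enters additively, so that the resulting diversity order $n(m_h+m_sK)$ recovers the Scenario~\uppercase\expandafter{\romannumeral1} value $nm_sK$ upon formally removing the direct link, consistent with Corollary~\ref{ch7_Coro_Without_D}.
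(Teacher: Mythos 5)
Your proposal is correct and follows exactly the paper's own route: evaluate the near-zero CDF bounds from Lemma~\ref{ch7_Lemm_With} at $z=\sqrt{\tilde{\rho}_{n,max}}$, extract the dominant $i=0$ term (collapsing the prefactor to $\frac{N!}{(N-n)!n!}$), and identify the $b\to\infty$ case with continuous phase shifts for the lower bound. No gaps; the deferral of the hard work to the lemma's Laplace-transform derivation matches the paper's structure as well.
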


\begin{proof}
For the upper bound, we have
\begin{equation}
\begin{split}
\mathbb{P}_n^{2,up,\infty}=&F_{Z_n,low}^{0^+}\left(\sqrt{\tilde{\rho}_{n,max}}\right)
\approx \frac{N!}{(N-n)!(n-1)!}\\
&\times \sum_{i=0}^{N-n} \binom{N-n}{i}\frac{(-1)^i}{n+i}\xi_3^{n+i}\tilde{\rho}_{n,max}^{(m_h+m_sK)(n+i)}.
\end{split}
\end{equation}
Then, by extracting the lowest-order term, we can derive \eqref{ch7_eq_With_1}.

It is the case of continuous shifts when $b \rightarrow \infty$, and the corresponding outage probability is the lower bound.
We have
\begin{equation}
\begin{split}
\mathbb{P}_n^{2,low,\infty}=&F_{Z_n}^{0^+,b \rightarrow \infty}\left(\sqrt{\tilde{\rho}_{n,max}}\right)
= \frac{N!}{(N-n)!(n-1)!}\\
&\times \sum_{i=0}^{N-n} \binom{N-n}{i}
\frac{(-1)^i}{n+i}\xi_4^{n+i}\tilde{\rho}_{n,max}^{(m_h+m_sK)(n+i)}.
\end{split}
\end{equation}
Next, after extracting the lowest-order term, we can obtain \eqref{ch7_eq_With_3}.
This completes the proof.
\end{proof}

\begin{corollary}\label{ch7_Coro_With_D}
In Scenario \uppercase\expandafter{\romannumeral2}, the diversity orders of the $n$th UD for both discrete and continuous phase shifts under the NOMA scheme are $n(m_h+m_sK)$.
\end{corollary}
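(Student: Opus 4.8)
The plan is to extract the diversity order directly from the high-SNR asymptotic bounds already established in the preceding proposition, exactly as was done for Scenario \uppercase\expandafter{\romannumeral1} in Corollary \ref{ch7_Coro_Without_D}. Recall that the diversity order is defined as $d = -\lim_{\rho \to \infty} \frac{\log \mathbb{P}_n^2}{\log \rho}$, so the task reduces to identifying the exponent of $\rho$ in the dominant term of the outage probability.

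First, I would observe that each threshold $\tilde{\rho}_{n\leftarrow l}$ carries a factor of $1/\rho$, since every branch of its defining expression has $\rho$ in the denominator; hence $\tilde{\rho}_{n,max} \propto \rho^{-1}$. Substituting this scaling into the upper-bound expression \eqref{ch7_eq_With_1} gives $\mathbb{P}_n^{2,up,\infty} \propto \rho^{-n(m_h+m_sK)}$, and likewise the lower bound \eqref{ch7_eq_With_3} gives $\mathbb{P}_n^{2,low,\infty} \propto \rho^{-n(m_h+m_sK)}$, the two expressions differing only through the $\rho$-independent constants $\xi_3$ and $\xi_4$.

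Since the true outage probability $\mathbb{P}_n^2$ lies between these two bounds and both share the identical exponent $n(m_h+m_sK)$, a sandwich argument forces the limit defining $d$ to equal $n(m_h+m_sK)$. Because the lower bound \eqref{ch7_eq_With_3} also coincides with the continuous-phase-shift outage probability, the very same exponent governs the continuous case, thereby establishing the claimed diversity order $n(m_h+m_sK)$ for both discrete and continuous phase shifts simultaneously.

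Because the heavy lifting---deriving the CDF of the ordered channel gain $Z_n$ via the Laplace transform in Lemma \ref{ch7_Lemm_With} and then the asymptotic outage bounds in the proposition---is already complete, no substantive obstacle remains. The only point worth confirming is that the $b \ge 2$ condition and the multiplicative constants $\xi_3, \xi_4$ leave the exponent untouched, which is immediate since they contribute no powers of $\rho$.
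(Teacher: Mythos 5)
Your proposal is correct and follows essentially the same route as the paper: the paper's proof likewise observes that $\tilde{\rho}_{n,max}\propto \rho^{-1}$ and reads the diversity order off the exponents in \eqref{ch7_eq_With_1} and \eqref{ch7_eq_With_3}. Your version merely spells out the sandwich argument and the definition of diversity order that the paper leaves implicit, which is a harmless elaboration rather than a different approach.
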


\begin{proof}
Since $\tilde{\rho}_{n,max}\propto \rho^{-1}$, the diversity orders can be obtained based on \eqref{ch7_eq_With_1} and \eqref{ch7_eq_With_3}. This completes the proof.
\end{proof}

\begin{remark}
For Scenario \uppercase\expandafter{\romannumeral2}, the proposed system with $b$-bit discrete phase shifts achieves the same diversity order as that with continuous phase shifts when $b\ge2$, i.e., the application of discrete phase shifts does not jeopardize the diversity order.
The diversity order is related to the number of IRS reflecting elements, the Nakagami fading parameters of the direct and BS-IRS-UD links, and the order of the channel gain.
\end{remark}

\subsection{Performance Analysis of OMA}
Since there is no ordering for the channel gains in the OMA scheme, all UDs have the same outage probability.
The outage probability of U\textsubscript{$n$} is given by
\begin{equation}
\begin{split}
\mathbb{P}_{OMA}^2=\mathrm{P_r}(\Psi_{OMA}^2 < \tilde{\gamma}_{OMA}),
\end{split}
\end{equation}
where $\Psi_{OMA}^2=\rho|h_n + \mathbf{G}_n \mathbf{\Theta}_n \mathbf{g}_n|^2$.
Furthermore, $\mathbb{P}_{OMA}^2$ can be transformed into
\begin{equation}
\begin{split}
\mathbb{P}_{OMA}^2&=\mathrm{P_r}\left(Z^2 < \frac{\tilde{\gamma}_{OMA}}{\rho}\right),
\end{split}
\end{equation}
where $Z$ is the unordered equivalent channel gain.

\begin{proposition}
Under the OMA scheme in Scenario \uppercase\expandafter{\romannumeral2}, when $b\ge 2$, the high-SNR asymptotic expressions for the upper and lower bounds of U\textsubscript{$n$}'s outage probability are as follows:
\begin{equation}\label{ch7_eq_With_2}
\begin{split}
\mathbb{P}_{OMA}^{2,up,\infty} = \xi_3 \tilde{\gamma}_{OMA}^{m_h+m_sK}\rho^{-(m_h+m_sK)},
\end{split}
\end{equation}
\begin{equation}\label{ch7_eq_With_4}
\begin{split}
\mathbb{P}_{OMA}^{2,low,\infty} = \xi_4\tilde{\gamma}_{OMA}^{m_h+m_sK}\rho^{-(m_h+m_sK)}.
\end{split}
\end{equation}
Note that \eqref{ch7_eq_With_4} is also the outage probability for the case of continuous shifts.
\end{proposition}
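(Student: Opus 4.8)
The plan is to follow the argument used for the OMA proposition of Scenario \uppercase\expandafter{\romannumeral1} almost verbatim, exploiting the fact that OMA involves no user ordering and therefore reduces directly to the CDF of the \emph{unordered} equivalent channel gain evaluated near the origin.

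First I would recall that the OMA outage probability has already been written as $\mathbb{P}_{OMA}^2 = \mathrm{P_r}\!\left(Z^2 < \tilde{\gamma}_{OMA}/\rho\right) = F_Z\!\left(\sqrt{\tilde{\gamma}_{OMA}/\rho}\right)$, where $Z$ denotes the unordered equivalent channel gain. The small-argument behavior of this unordered CDF is precisely an ingredient produced en route to Lemma \ref{ch7_Lemm_With}: its leading term is $\xi_3\, z^{2m_h+2m_sK}$ in the discrete-phase-shift (upper-bound) case and $\xi_4\, z^{2m_h+2m_sK}$ in the continuous-phase-shift ($b\to\infty$, lower-bound) case. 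Equivalently, these coefficients are the $n=N=1$ specializations of \eqref{ch7_eq_Lemma_With_1} and \eqref{ch7_eq_Lemma_With_2}, for which the order-statistics prefactor and the alternating summation collapse to the single leading term.

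Next I would substitute $z=\sqrt{\tilde{\gamma}_{OMA}/\rho}$ into these leading terms. Since this argument tends to $0^+$ as $\rho\to\infty$, the leading-order behavior governs the high-SNR asymptote, and raising $\sqrt{\tilde{\gamma}_{OMA}/\rho}$ to the power $2m_h+2m_sK$ halves the exponent to yield $\left(\tilde{\gamma}_{OMA}/\rho\right)^{m_h+m_sK}$. This produces $\mathbb{P}_{OMA}^{2,up,\infty}=\xi_3\,\tilde{\gamma}_{OMA}^{m_h+m_sK}\rho^{-(m_h+m_sK)}$ and $\mathbb{P}_{OMA}^{2,low,\infty}=\xi_4\,\tilde{\gamma}_{OMA}^{m_h+m_sK}\rho^{-(m_h+m_sK)}$, i.e., \eqref{ch7_eq_With_2} and \eqref{ch7_eq_With_4}; identifying $b\to\infty$ with continuous shifts then recognizes the lower bound as the exact continuous-phase-shift outage probability.

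I expect no genuinely new obstacle: all the analytical difficulty---the Laplace-transform evaluation of the unordered CDF of $Z$ near $0$, which requires convolving the direct-link magnitude $|h_n|$ with the reflected sum $\beta\sum_{k=1}^{K}|G_k^n||g_k^n|e^{j\theta_k^{2,n,er}}$---has already been discharged in Appendix \ref{ch7_Appe_Lemm_With}. The single point meriting care is justifying that the absence of ordering legitimately strips away the combinatorial prefactor $N!/[(N-n)!(n-1)!]$ and the order-statistics sum of Lemma \ref{ch7_Lemm_With}, leaving the bare leading coefficient $\xi_3$ or $\xi_4$; thereafter the steps parallel the proof of the corresponding NOMA proposition exactly.
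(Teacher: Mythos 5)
Your proposal is correct and follows essentially the same route as the paper: the paper's proof likewise invokes the unordered CDF $F_{Z,low}^{0^+}(z)=\xi_3 z^{2m_h+2m_sK}$ from \eqref{ch7_eq_Appe_With_3} in Appendix \ref{ch7_Appe_Lemm_With}, substitutes $z=\sqrt{\tilde{\gamma}_{OMA}/\rho}$ as in Proposition \ref{ch7_Prop_Without_NOMA}, and obtains the continuous-phase-shift lower bound by the $b\to\infty$ replacement of $a$ with $\beta$ (i.e., $\xi_3\to\xi_4$). Your additional remark that the unordered CDF is the $n=N=1$ specialization of Lemma \ref{ch7_Lemm_With} is a harmless, correct justification of why the order-statistics prefactor disappears.
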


\begin{proof}
The CDF of $Z$ has been derived as \eqref{ch7_eq_Appe_With_3} in the proof of Lemma \ref{ch7_Lemm_With}.
Other steps are similar to the proof of Proposition \ref{ch7_Prop_Without_NOMA}.
\end{proof}

\begin{corollary}\label{ch7_Coro_With_D_OMA}
In Scenario \uppercase\expandafter{\romannumeral2}, the diversity orders of each UD for both discrete and continuous phase shifts under the OMA scheme are $m_h+m_sK$.
\end{corollary}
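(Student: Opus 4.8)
The plan is to mirror the argument used for Corollary \ref{ch7_Coro_With_D}, reading off the diversity order directly from the high-SNR asymptotic expressions already established for the OMA outage probability. First I would recall the definition of diversity order, namely $d = -\lim_{\rho\to\infty}\log\mathbb{P}(\rho)/\log\rho$, and observe that the two relevant expressions are \eqref{ch7_eq_With_2} and \eqref{ch7_eq_With_4}, governing the upper and lower bounds respectively.

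The key observation is that in both expressions the only $\rho$-dependence resides in the factor $\rho^{-(m_h+m_sK)}$: the prefactors $\xi_3$, $\xi_4$, and the threshold $\tilde{\gamma}_{OMA}=2^{N\tilde{R}}-1$ are all constants independent of $\rho$. This is precisely the feature that distinguishes the OMA analysis from the NOMA case, where instead the $\rho$-dependence is carried by $\tilde{\rho}_{n,max}\propto\rho^{-1}$. Substituting each bound into the definition therefore yields a diversity order of exactly $m_h+m_sK$, identical for both the discrete-phase-shift upper bound and the continuous-phase-shift lower bound.

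Finally I would close the argument by a squeeze: since the true OMA outage probability $\mathbb{P}_{OMA}^2$ is bracketed by $\mathbb{P}_{OMA}^{2,low,\infty}$ and $\mathbb{P}_{OMA}^{2,up,\infty}$ in the high-SNR regime, and both bounds share the common exponent $m_h+m_sK$, the diversity order of $\mathbb{P}_{OMA}^2$ must equal $m_h+m_sK$ as well. Because the discrete and continuous cases differ only in the constant prefactor ($\xi_3$ versus $\xi_4$) and not in the power of $\rho$, this value holds for both, confirming that discrete phase shifts do not degrade the diversity order.

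I do not anticipate a genuine obstacle here, as the substantive content has been front-loaded into Lemma \ref{ch7_Lemm_With} and the preceding proposition. The only point requiring a modicum of care is the squeeze step, i.e.\ confirming that the two bounds actually bracket the exact outage probability at high SNR rather than merely matching its leading term; this, however, follows directly from how the lower and upper bounds on the channel-gain CDF are constructed in the proof of Lemma \ref{ch7_Lemm_With}.
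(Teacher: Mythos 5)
Your proposal is correct and follows essentially the same route as the paper: the paper leaves this corollary without a printed proof precisely because, as in its proof of Corollary \ref{ch7_Coro_With_D}, the diversity order is read off directly from the exponent of $\rho$ in the asymptotic bounds \eqref{ch7_eq_With_2} and \eqref{ch7_eq_With_4}, where $\tilde{\gamma}_{OMA}$, $\xi_3$, and $\xi_4$ are constants. Your added squeeze step (using the exact sandwiching of the discrete-phase-shift channel gain between $|h_*|+a\sum_k|G_k^*||g_k^*|$ and the continuous-phase-shift gain) is a valid and slightly more explicit justification of what the paper takes for granted.
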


\begin{remark}
For the OMA case of Scenario \uppercase\expandafter{\romannumeral2}, the diversity order with discrete phase shifts is the same as that with continuous phase shifts.
The diversity order is related to the number of IRS reflecting elements and the Nakagami fading parameters of the direct and BS-IRS-UD links.
\end{remark}

\subsection{Summary of Diversity Orders}
After completing all analyses for both scenarios, all results are summarized in Table \ref{ch7_Tabl_1} for ease of reference.
\begin{table}
\centering
\caption{Diversity orders of U\textsubscript{$n$} (the $n$th UD) for all cases}
\begin{tabular}{|c|c|c|c|c|}
\hline
\centering
MA scheme  & \multicolumn{2}{c|}{NOMA} & \multicolumn{2}{c|}{OMA}  \\
\hline
Phase shifts & Discrete & Continuous & Discrete & Continuous \\
\hline
Scenario \uppercase\expandafter{\romannumeral1}                   & $nm_sK$ & $nm_sK$ & $m_sK$ & $m_sK$ \\
\hline
Scenario \uppercase\expandafter{\romannumeral2}                   & $n(m_h+m_sK)$ & $n(m_h+m_sK)$ & $m_h+m_sK$ & $m_h+m_sK$ \\
\hline
\end{tabular}\label{ch7_Tabl_1}
\end{table}
First, we observe that the discrete phase shifts can achieve the same diversity order as the continuous phase shifts. On the other hand, discrete phase shifts have lower hardware complexity and cost, which is the advantage of discrete phase shifts.
We also notice that the diversity order is improved with the introduction of direct link.
Furthermore, in both scenarios, the lowest diversity order under the NOMA scheme is the same as the diversity order of each UD under the OMA scheme. Hence, NOMA outperforms OMA in terms of diversity order.

\section{Numerical Results}\label{ch7_numerical}

\begin{table}
\centering
\caption{Setting of parameters}
\resizebox{\textwidth}{!}
{
\begin{tabular}{|c|c|}
\hline
Number of IRSs and UDs & $N=2$, $3$, and $4$\\
\hline
Number of IRS reflecting elements & $K=1$, $2$, $3$, $4$, and $5$\\
\hline
Number of resolution bits & $b=2$, $3$, $4$, and $\infty$\\
\hline
Amplitude-reflection coefficient & $\beta=0.9$\\
\hline
Nakagami fading parameters& $m_G=2$, $m_g=1$, and $m_h=1$\\
\hline
Target data rates & $\tilde{R}_i=\tilde{R}=1$ bps/Hz, $i=1,\ 2,\ \cdots,\ N$\\
\hline
\multirow{3}{*}{Power allocation coefficients}   & $N=2$: $\alpha_1=0.9$, $\alpha_2=0.1$\\
                                        & $N=3$: $\alpha_1=0.7$, $\alpha_2=0.2$, and $\alpha_3=0.1$\\
                                        & $N=4$: $\alpha_1=0.6$, $\alpha_2=0.25$, $\alpha_3=0.1$, and $\alpha_4=0.05$\\
\hline
\end{tabular}\label{ch7_Tabl_2}
}
\end{table}

In this section, numerical results are presented for the performance evaluation of the considered network. Monte Carlo simulations are conducted to verify the accuracy.
All $N$ IRSs are pre-deployed so that their service coverages do not overlap, and each IRS selects a UD within its coverage.
The parameters are set by referring to some relevant works \cite{hou2019reconfigurable,cheng2020downlink,ding2014performance}, which are shown in Table \ref{ch7_Tabl_2}.

\subsection{Scenario \uppercase\expandafter{\romannumeral1}}

\begin{figure}[tb]
\begin{center}
\includegraphics[width=0.75\linewidth]{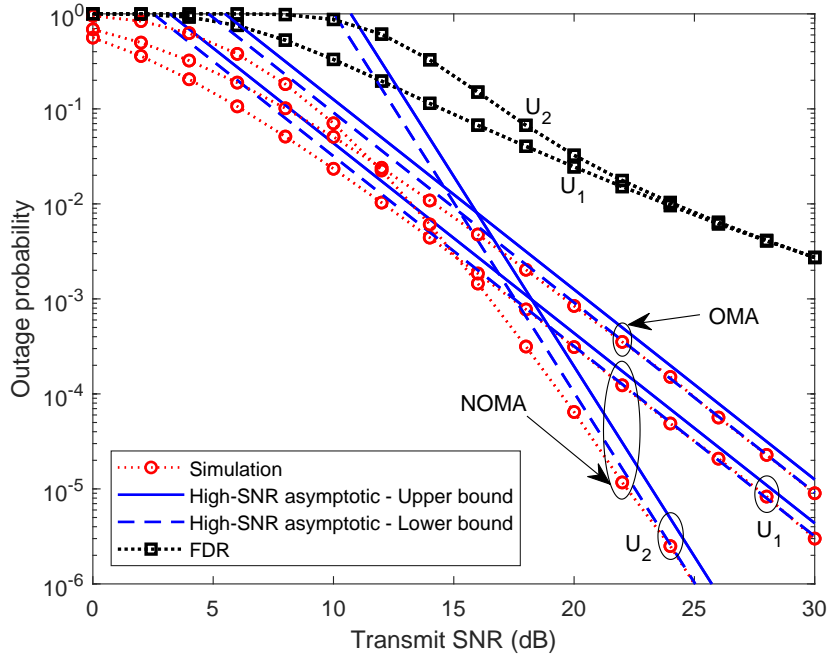}
\end{center}
\caption{Outage probabilities versus SNR in Scenario \uppercase\expandafter{\romannumeral1} when $N=2$, $K=2$, $m_G=2$, $m_g=1$, $\alpha_1=0.9$, $\alpha_2=0.1$, and $b=3$.}
\label{ch7_Fig_OP_Without_N_2_K_2}
\end{figure}

In Fig. \ref{ch7_Fig_OP_Without_N_2_K_2}, two UDs' outage probabilities versus the transmit SNR in Scenario \uppercase\expandafter{\romannumeral1} when $N=2$, $K=2$, and $b=3$ are plotted.
First, we observe that the outage probability of each UD for NOMA gradually approaches the interval between the upper and lower bounds in the high-SNR regime, which are derived from \eqref{ch7_eq_Without_1} and \eqref{ch7_eq_Without_3}, respectively.
Then, by observing the slopes, the diversity orders of U\textsubscript{$1$} and U\textsubscript{$2$} are $2$ and $4$, respectively, which is consistent with Corollary \ref{ch7_Coro_Without_D}.
For OMA, the outage probability points are also located between the upper and lower bounds in the high-SNR regime, which are derived from \eqref{ch7_eq_Without_2} and \eqref{ch7_eq_Without_4}, respectively.
There is only one curve for OMA, as both UDs have the same outage probability statistically.
It is observed that the diversity order for OMA is $2$, which validates Corollary \ref{ch7_Coro_Without_D_OMA}.
This observation demonstrates the superiority of NOMA over OMA from the diversity order perspective.
For comparisons, we regard a multi-FDR-assisted NOMA network as the benchmark.
Specifically, FDRs under the classic protocol are deployed at the places of IRSs to help their respective UDs.
The FDR works under a realistic assumption that is the same as \cite{ding2020simple,zhong2016non}. Specifically, since the relay is aware of its own transmitted symbol, self-interference cancellation can be applied.
We assume that the self-interference cancellation is imperfect and the channel of residual self-interference experiences the Nakagami-$m$ fading.
Since the reflection at the IRS is passive without consuming the energy, for a fair comparison, we assume that the transmit power at the BS and the FDR is $P_r=0.5P$.
We observe that the outage probabilities of both UDs for FDR-NOMA are higher than those for IRS-NOMA.
Moreover, the outage probability of each UD for FDR-NOMA gradually converges to a floor as the transmit SNR increases, due to the self-interference.
These two phenomenons show the disadvantage of FDRs as compared with IRSs.

\begin{figure}[tb]
\begin{center}
\includegraphics[width=0.75\linewidth]{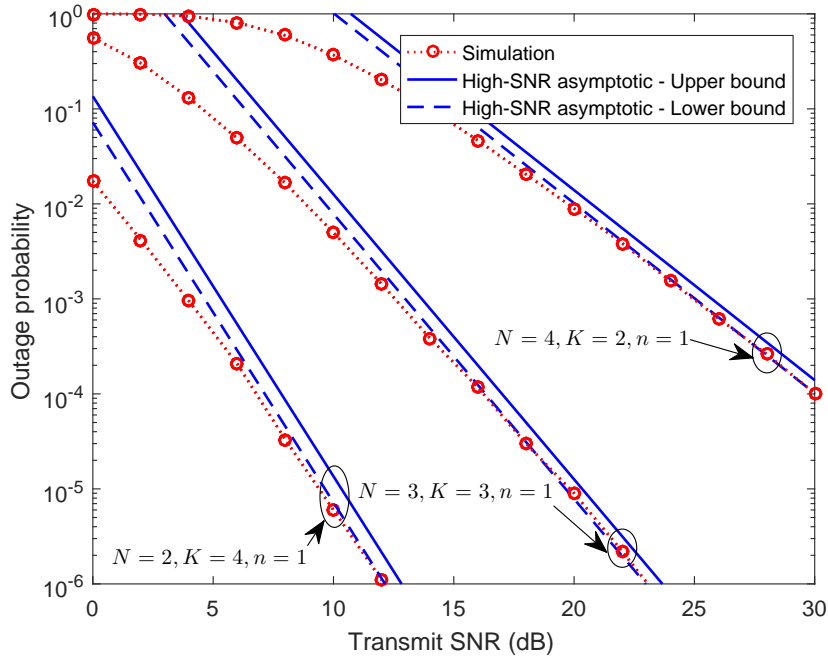}
\end{center}
\caption{Outage probability of U\textsubscript{$1$} versus SNR for NOMA in Scenario \uppercase\expandafter{\romannumeral1} when $m_G=2$, $m_g=1$, and $b=3$.}
\label{ch7_Fig_OP_Without_Various_N_Large_K}
\end{figure}

We further plot the outage probabilities of U\textsubscript{$1$} with different values of $N$ and $K$ in Fig. \ref{ch7_Fig_OP_Without_Various_N_Large_K}.
First, we observe that the outage probability of U\textsubscript{$1$} gradually approaches the interval between the analytical upper and lower bounds in the high-SNR regime.
The observation demonstrates that the diversity orders of U\textsubscript{$1$} are $2$, $3$, and $4$ when $N=4$ $K=2$, $N=3$ $K=3$, and $N=2$ $K=4$, respectively, which further validates our analytical results.

\begin{figure}[tb]
\begin{center}
\includegraphics[width=0.75\linewidth]{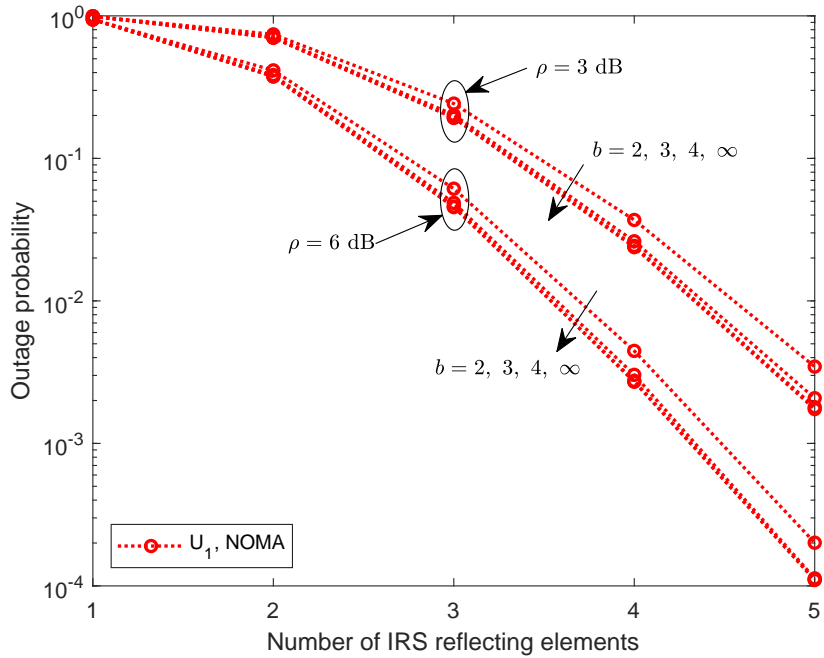}
\end{center}
\caption{Outage probability of U\textsubscript{$1$} versus $K$ for NOMA in Scenario \uppercase\expandafter{\romannumeral1} when $\rho=3$ dB and $6$ dB. Here, $N=3$, $m_G=2$, $m_g=1$, $\alpha_1=0.7$, $\alpha_2=0.2$, and $\alpha_3=0.1$.}
\label{ch7_Fig_OP_K_Without}
\end{figure}

In Fig. \ref{ch7_Fig_OP_K_Without}, the outage probability of U\textsubscript{$1$} versus the number of IRS reflecting elements in Scenario \uppercase\expandafter{\romannumeral1} is plotted.
First, we observe that the outage probability decreases as $K$ increases. Then, we observe that the outage probability for $\rho=6$ dB is lower than that for $\rho=3$ dB. Next, we observe that for each specific transmit SNR, the outage probability decreases as $b$ increases. More importantly, when $b \ge 3$, the outage probability is close to the outage probability of continuous phase shifts. It reveals that a $3$-bit resolution for discrete phase shifts is sufficient to achieve near-optimal performance.

\subsection{Scenario \uppercase\expandafter{\romannumeral2}}

\begin{figure}[tb]
\begin{center}
\includegraphics[width=0.75\linewidth]{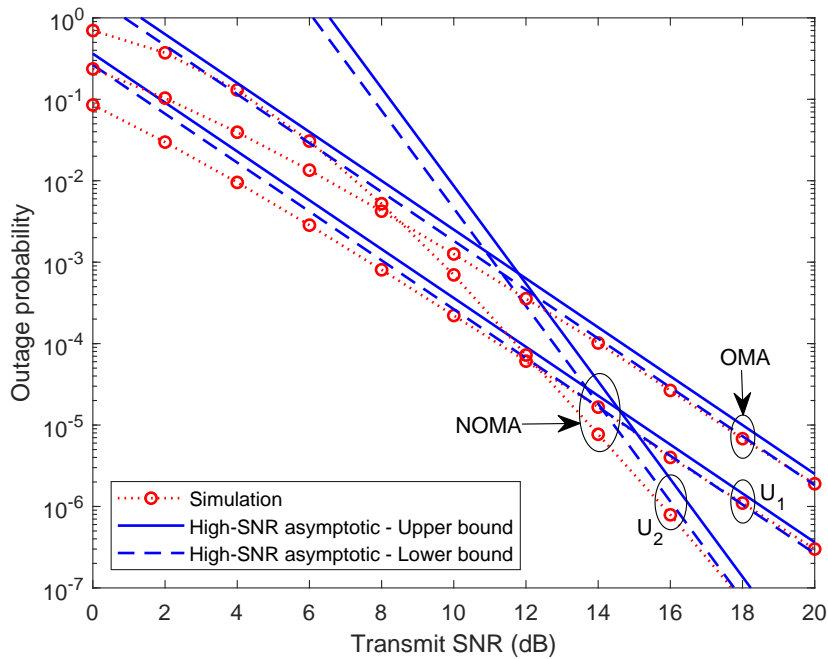}
\end{center}
\caption{Outage probabilities versus SNR in Scenario \uppercase\expandafter{\romannumeral2} when $N=2$, $K=2$, $m_G=2$, $m_g=1$, $m_h=1$, $\alpha_1=0.9$, $\alpha_2=0.1$, and $b=3$.}
\label{ch7_Fig_OP_With_N_2_K_2}
\end{figure}

In Fig. \ref{ch7_Fig_OP_With_N_2_K_2}, two UDs' outage probabilities versus the transmit SNR in Scenario \uppercase\expandafter{\romannumeral2} when $N=2$, $K=2$, and $b=3$ are plotted.
First, we observe that the outage probability of each UD for NOMA gradually approaches the interval between the upper and lower bounds that are derived from \eqref{ch7_eq_With_1} and \eqref{ch7_eq_With_3}, respectively.
Then, by observing the slopes, the diversity orders of U\textsubscript{$1$} and U\textsubscript{$2$} are $3$ and $6$, respectively, which is consistent with Corollary \ref{ch7_Coro_With_D}.
For OMA, the outage probability points are also located between the upper and lower bounds in the high-SNR regime, which are derived from \eqref{ch7_eq_With_2} and \eqref{ch7_eq_With_4}, respectively.
There is only one curve for OMA, as both UDs have the same outage probability statistically.
It is observed that the diversity order for OMA is $3$, which validates Corollary \ref{ch7_Coro_With_D_OMA}.
This observation demonstrates the superiority of NOMA over OMA from the diversity order perspective.

\begin{figure}[tb]
\begin{center}
\includegraphics[width=0.75\linewidth]{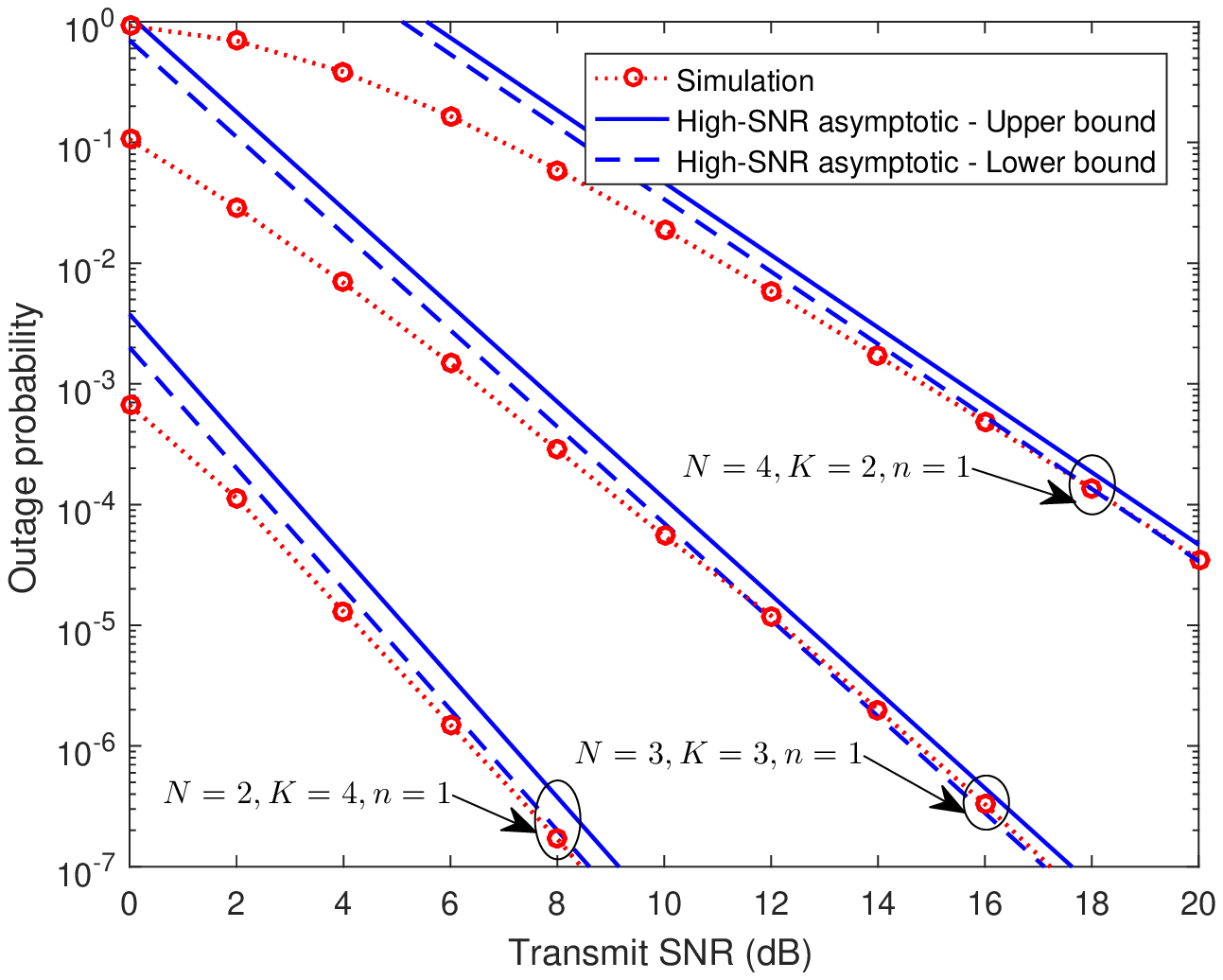}
\end{center}
\caption{Outage probability of U\textsubscript{$1$} versus SNR for NOMA in Scenario \uppercase\expandafter{\romannumeral2} when $m_G=2$, $m_g=1$, $m_h=1$, and $b=3$.}
\label{ch7_Fig_OP_With_Various_N_Large_K}
\end{figure}

We further plot the outage probabilities of U\textsubscript{$1$} with different values of $N$ and $K$ in Fig. \ref{ch7_Fig_OP_With_Various_N_Large_K}.
First, we observe that the outage probability of U\textsubscript{$1$} gradually approaches the interval between the analytical upper and lower bounds in the high-SNR regime.
We observe that the diversity orders of U\textsubscript{$1$} are $3$, $4$, and $5$ when $N=4$ $K=2$, $N=3$ $K=3$, and $N=2$ $K=4$, respectively, which further validates our analysis.
Due to the existence of direct links, it is observed that the scenario with direct link has a larger diversity order than the scenario without direct link.

\begin{figure}[tb]
\begin{center}
\includegraphics[width=0.75\linewidth]{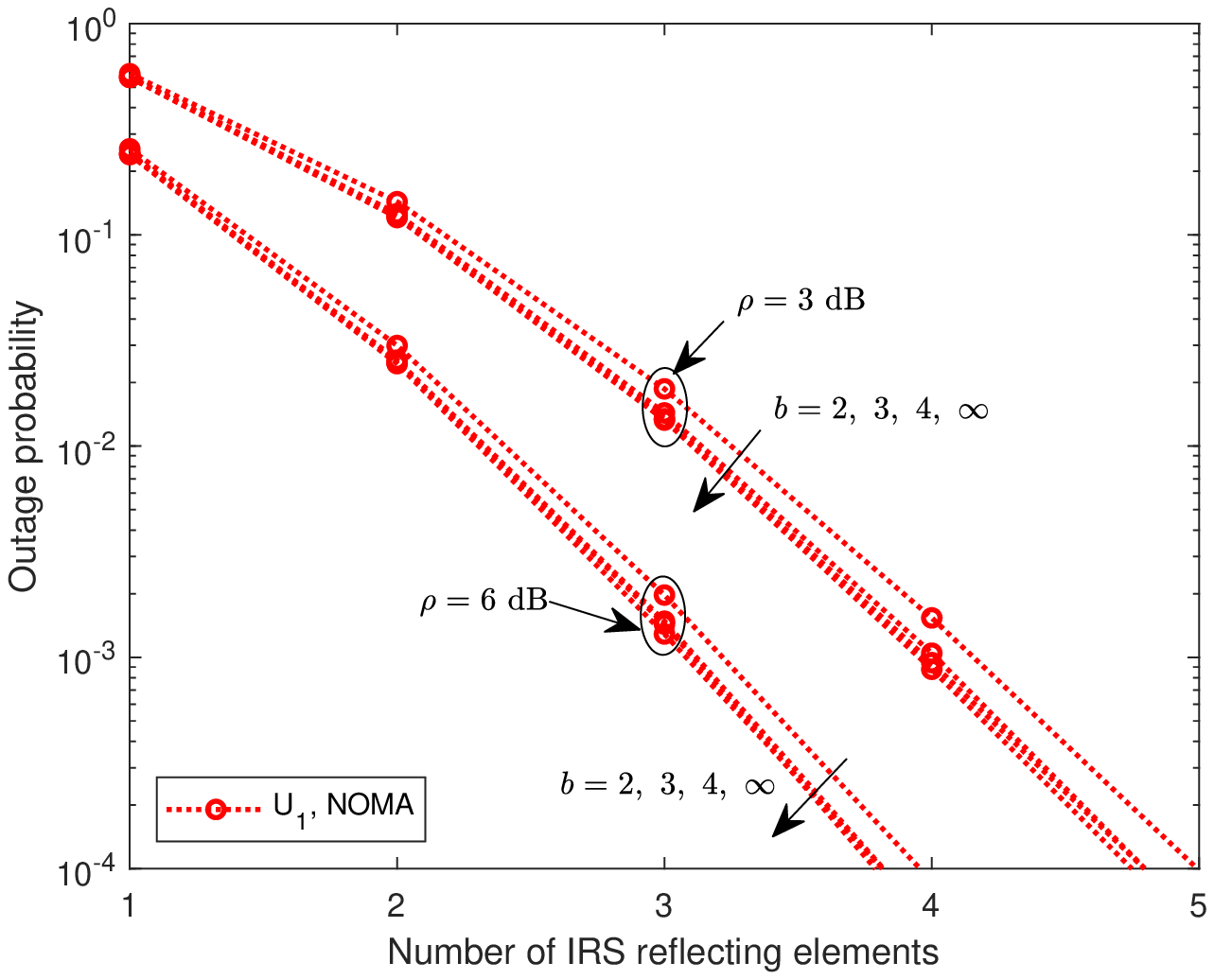}
\end{center}
\caption{Outage probability of U\textsubscript{$1$} versus $K$ for NOMA in Scenario \uppercase\expandafter{\romannumeral2} when $\rho=3$ dB and $6$ dB. Here, $N=3$, $m_G=2$, $m_g=1$, $m_h=1$, $\alpha_1=0.7$, $\alpha_2=0.2$, and $\alpha_3=0.1$.}
\label{ch7_Fig_OP_K_With}
\end{figure}

In Fig. \ref{ch7_Fig_OP_K_With}, the outage probability of U\textsubscript{$1$} versus the number of IRS reflecting elements in Scenario \uppercase\expandafter{\romannumeral2} is plotted.
First, we observe that the outage probability decreases as $K$ increases. Then, we observe that the outage probability for $\rho=6$ dB is lower than that for $\rho=3$ dB. Next, we observe that for each specific transmit SNR, the outage probability decreases with the increase of $b$. Importantly, when $b \ge 3$, the outage probability is close to the outage probability of continuous phase shifts.
It is revealed that when direct links are considered, a $3$-bit resolution is also sufficient for discrete phase shifts to achieve near-optimal performance.

\section{Conclusion}\label{ch7_summary}
We have studied multi-IRS assisted NOMA systems with discrete phase shifts.
We have determined the diversity order by deriving the high-SNR asymptotic expressions for the upper and lower bounds of the outage probability.
It has been demonstrated that the use of discrete phase shift does not degrade diversity order.
Furthermore, it has been shown that a $3$-bit resolution is sufficient for discrete phase shifts to achieve the near-optimal performance.
Therefore, IRSs with discrete phase shifts strike a balance between system performance and hardware cost.
As the CSI is assumed to be perfectly known at the BS in this paper, IRS-assisted networks with imperfect CSI are worth investigating for future work.
In addition, optimizing power allocation for NOMA can further improve the performance of the considered network, which is another promising future research direction.

\begin{appendices}

\section{Proof of Lemma \ref{ch7_Lemm_Without}}\label{ch7_Appe_Lemm_Without}
\renewcommand{\theequation}{\thesection.\arabic{equation}}
\setcounter{equation}{0}

\textit{Unordered channel gain:} First, we need to derive the PDF of the unordered channel gain after adopting the optimal discrete phase shifts $\big\{\hat{\theta}_k^*\big\}$,\footnote{We use the superscript $*$ instead of $n$ to indicate non-ordering in this section.} which is denoted by $Y$.
We have
\begin{equation}\label{ch7_eq_Appen_Without_1}
\begin{split}
Y & = \beta \left|\sum_{k=1}^{K} |G_k^*| |g_k^*| e^{j\theta_k^{1,*,er}}\right|
\ge \beta \sum_{k=1}^{K} |G_k^*| |g_k^*| \cos\left(\theta_k^{1,*,er}\right)
\ge a \sum_{k=1}^{K} |G_k^*| |g_k^*|,
\end{split}
\end{equation}
where the equality holds when $b\rightarrow\infty$.
Next, we denote $Q=\sum_{k=1}^{K}Q_k$, where $Q_k=|G_k^*||g_k^*|$.
Since all $Q_k$ $(k=1, 2, \cdots, K)$ are independent and identically distributed (i.i.d.), according to \cite{bhargav2018product}, the PDF of single $Q_k$ (the product of two Nakagami-$m$ random variables) is given by
\begin{equation}
\begin{split}
f_{Q_k}(q)=\frac{4(m_sm_l)^{\frac{m_s+m_l}{2}}}{\Gamma(m_s)\Gamma(m_l)}q^{m_s+m_l-1}K_{m_s-m_l}(2\sqrt{m_sm_l}q),
\end{split}
\end{equation}
for $q\ge0$, where $K_v(\cdot)$ is the modified Bessel function of the second kind.
The Laplace transform of $f_{Q_k}(q)$ is derived as
\begin{equation}
\begin{split}
\mathcal{L}_{f_{Q_k}}(s)=&\frac{4(m_sm_l)^{\frac{m_s+m_l}{2}}}{\Gamma(m_s)\Gamma(m_l)}
\int_0^{\infty}q^{m_s+m_l-1}
e^{-sq}K_{m_s-m_l}(2\sqrt{m_sm_l}q)dq.
\end{split}
\end{equation}
Furthermore, by referring to \cite[\textrm{eq}. (6.621.3)]{gradshteyn2007}, we have
\begin{equation}
\begin{split}
&\mathcal{L}_{f_{Q_k}}(s)=\phi_1\left(s+2\sqrt{m_sm_l}\right)^{-2m_s}
F\bigg(2m_s,m_s-m_l+\frac{1}{2};m_s+m_l+\frac{1}{2};\frac{s-2\sqrt{m_sm_l}}{s+2\sqrt{m_sm_l}}\bigg),
\end{split}
\end{equation}
where $\phi_1=\frac{\sqrt{\pi}4^{m_s-m_l+1}(m_sm_l)^{m_s}\Gamma(2m_s)\Gamma(2m_l)}{\Gamma(m_s)\Gamma(m_l)\Gamma\left(m_s+m_l+\frac{1}{2}\right)}$ and $F(\cdot, \cdot; \cdot; \cdot)$ is the hypergeometric series.
When $s\rightarrow\infty$, since $m_s<m_l$ satisfies the condition of \cite[\textrm{eq}. (9.122.1)]{gradshteyn2007}, we have
\begin{equation}\label{ch7_eq_Appe_Without_for_With}
\begin{split}
\mathcal{L}_{f_{Q_k}}^{\infty}(s)=\phi_2 s^{-2m_s},
\end{split}
\end{equation}
where $\phi_2=\frac{\sqrt{\pi}4^{m_s-m_l+1}(m_sm_l)^{m_s}\Gamma(2m_s)\Gamma(2m_l-2m_s)}{\Gamma(m_s)\Gamma(m_l)\Gamma\left(m_l-m_s+\frac{1}{2}\right)}$.
Since all $Q_k$ are i.i.d., we have
\begin{equation}\label{ch7_eq_Appe_Without_2}
\begin{split}
\mathcal{L}_{f_Q}^{\infty}(s)=\prod_{k=1}^{K}\mathcal{L}_{f_{Q_k}}^{\infty}(s)=\phi_2^Ks^{-2m_sK}.
\end{split}
\end{equation}
Thus, by performing the inverse Laplace transform for \eqref{ch7_eq_Appe_Without_2} and referring to \cite[\textrm{eq}. (17.13.3)]{gradshteyn2007}, the PDF of $Q$ for $q\rightarrow0^+$ can be derived as
\begin{equation}
\begin{split}
f_Q^{0^+}(q)=\frac{\phi_2^K}{\Gamma(2m_sK)}q^{2m_sK-1}.
\end{split}
\end{equation}
Following that, according to \cite[\textrm{eq}. (3.351.1)]{gradshteyn2007}, the CDF of $Q$ for $q\rightarrow0^+$ can be derived as
\begin{equation}
\begin{split}
F_Q^{0^+}(q)&=\frac{\phi_2^K}{\Gamma(2m_sK)2m_sK}q^{2m_sK}.
\end{split}
\end{equation}
Finally, the PDF and CDF of $Y$'s lower bound for $y\rightarrow0^+$ can be derived as follows:
\begin{equation}
\begin{split}
f_{Y,low}^{0^+}(y)=\frac{1}{a}f_Q^{0^+}\left(\frac{y}{a}\right)=\phi_3 y^{2m_sK-1},
\end{split}
\end{equation}
\begin{equation}\label{ch7_eq_Appe_Without_4}
\begin{split}
F_{Y,low}^{0^+}(y)&=F_Q^{0^+}\left(\frac{y}{a}\right)=\xi_1 y^{2m_sK},
\end{split}
\end{equation}
where $\phi_3=\frac{\phi_2^K}{\Gamma(2m_sK)a^{2m_sK}}$ and $\xi_1 =\frac{\phi_3}{2m_sK}$.

\textit{Ordered channel gain:} Based on order statistics \cite{david2004order}, the CDF of $Y_n$'s lower bound for $y\rightarrow0^+$ is given by
\begin{equation}
\begin{split}
F_{Y_n,low}^{0^+}(y)=&\frac{N!}{(N-n)!(n-1)!}\sum_{i=0}^{N-n} \binom{N-n}{i}
\frac{(-1)^i}{n+i}\left(F_{Y,low}^{0^+}(y)\right)^{n+i}.
\end{split}
\end{equation}
For the case of $b \rightarrow \infty$, the quantization error $\theta_k^{1,n,er} \rightarrow 0$, and the equality holds in \eqref{ch7_eq_Appen_Without_1} with $a=\beta \cos\left(\frac{\pi}{2^b}\right) \rightarrow \beta$.
Therefore, the derived CDF is for $Y_n$, not $Y_n$'s lower bound.
Furthermore, we can obtain \eqref{ch7_eq_Lemma_Without_2} after substituting $a$ by $\beta$ in \eqref{ch7_eq_Lemma_Without_1}.
This completes the proof.

\section{Proof of Lemma \ref{ch7_Lemm_With}}\label{ch7_Appe_Lemm_With}
\renewcommand{\theequation}{\thesection.\arabic{equation}}
\setcounter{equation}{0}

\textit{Unordered channel gain:} First, we need to derive the PDF of the unordered channel gain which is denoted by $Z$.
We have\footnote{We use the subscript/superscript $*$ instead of $n$ to indicate non-ordering in this section.}
\begin{equation}\label{ch7_eq_Appe_With_1}
\begin{split}
Z & \ge |h_*| + \beta \sum_{k=1}^{K} |G_k^*| |g_k^*| \cos\left(\theta_k^{2,*,er}\right)
\ge |h_*| + a \sum_{k=1}^{K} |G_k^*| |g_k^*|,
\end{split}
\end{equation}
where the equality holds when $b\rightarrow\infty$.
Since $h_*$ follows the Nakagami-$m$ fading model, the PDF of $|h_*|$ is given by
\begin{equation}
\begin{split}
f_{|h_*|}(x)=\frac{2m_h^{m_h}}{\Gamma(m_h)}x^{2m_h-1}e^{-m_hx^2}.
\end{split}
\end{equation}
The Laplace transform of $f_{|h_*|}(x)$ is given by
\begin{equation}
\begin{split}
\mathcal{L}_{f_{|h_*|}}(s)=&\frac{2m_h^{m_h}}{\Gamma(m_h)}\int_0^{\infty}x^{2m_h-1}e^{-m_hx^2}e^{-sx}dx.
\end{split}
\end{equation}
Then, by referring to \cite[\textrm{eq}. (3.462.1)]{gradshteyn2007}, we can derive that
\begin{equation}\label{ch7_eq_Appe_With_2}
\begin{split}
\mathcal{L}_{f_{|h_*|}}(s)=&\frac{\Gamma(2m_h)}{\Gamma(m_h)}e^{\frac{s^2}{8m_h}}D_{-2m_h}\left(\frac{s}{\sqrt{2m_h}}\right),
\end{split}
\end{equation}
where $D_p(\cdot)$ is the Parabolic cylinder function.
Furthermore, based on $D_p(s)\approx e^{-\frac{s^2}{4}}s^p$ \cite[\textrm{eq}. (9.246.1)]{gradshteyn2007}, we approximate \eqref{ch7_eq_Appe_With_2} for $s\rightarrow\infty$ as
\begin{equation}
\begin{split}
\mathcal{L}_{f_{|h_*|}}^{\infty}(s)=&\frac{\Gamma(2m_h)(2m_h)^{m_h}}{\Gamma(m_h)}s^{-2m_h}.
\end{split}
\end{equation}
On the other hand, we denote $Q^{'}=\sum_{k=1}^{K}Q_k^{'}$ with $Q_k^{'}=a|G_k^*||g_k^*|=a Q_k$ and have
\begin{equation}
\begin{split}
f_{Q_k^{'}}(q')=\frac{1}{a}f_{Q_k}\left(\frac{q'}{a}\right).
\end{split}
\end{equation}
Then, the Laplace transform of $f_{Q_k^{'}}(q')$ is derived as
\begin{equation}
\begin{split}
\mathcal{L}_{f_{Q_k^{'}}}(s)=\mathcal{L}_{f_{Q_k}}(as).
\end{split}
\end{equation}
Since all $Q_k^{'}$ are i.i.d., we have
\begin{equation}
\begin{split}
\mathcal{L}_{f_{Q^{'}}}(s)=\prod_{k=1}^{K}\mathcal{L}_{f_{Q_k^{'}}}(s)=\left(\mathcal{L}_{f_{Q_k^{'}}}(s)\right)^K.
\end{split}
\end{equation}
Furthermore, since $|h_*|$ and $Q^{'}$ are independent, by referring to \eqref{ch7_eq_Appe_Without_for_With}, we have
\begin{equation}
\begin{split}
\mathcal{L}_{f_{Z,low}}^{\infty}(s)&=\mathcal{L}_{f_{|h_*|}}^{\infty}(s)\mathcal{L}_{f_{Q^{'}}}^{\infty}(s)
=\mathcal{L}_{f_{|h_*|}}^{\infty}(s) \left(\mathcal{L}_{f_{Q_k}}^{\infty}(as)\right)^K \\
&=\frac{\Gamma(2m_h)(2m_h)^{m_h}\phi_2^Ka^{-2m_sK}}{\Gamma(m_h)}s^{-2m_h-2m_sK},
\end{split}
\end{equation}
for $s\rightarrow\infty$.
After carrying out the inverse Laplace transform, we can derive the PDF and CDF of $Z$'s lower bound for $z\rightarrow0^+$ as follows:
\begin{equation}
\begin{split}
f_{Z,low}^{0^+}(z)=\phi_5 z^{2m_h+2m_sK-1},
\end{split}
\end{equation}
\begin{equation}\label{ch7_eq_Appe_With_3}
\begin{split}
F_{Z,low}^{0^+}(z)=\xi_3 z^{2m_h+2m_sK},
\end{split}
\end{equation}
where $\phi_5=\frac{\Gamma(2m_h)(2m_h)^{m_h}\phi_2^Ka^{-2m_sK}}{\Gamma(m_h)\Gamma(2m_h+2m_sK)}$ and $\xi_3=\frac{\phi_5}{2m_h+2m_sK}$.

\textit{Ordered channel gain:}
Based on order statistics \cite{david2004order}, the CDF of $Z_n$'s lower bound for $z\rightarrow0^+$ is given by
\begin{equation}
\begin{split}
F_{Z_n,low}^{0^+}(z)=&\frac{N!}{(N-n)!(n-1)!}\sum_{i=0}^{N-n} \binom{N-n}{i}
\frac{(-1)^i}{n+i}\left(F_{Z,low}^{0^+}(z)\right)^{n+i}.
\end{split}
\end{equation}
For the case of $b \rightarrow \infty$, it is similar to the proof of Lemma \ref{ch7_Lemm_Without}, and we can derive \eqref{ch7_eq_Lemma_With_2}.
This completes the proof.
\end{appendices}

\section*{Acknowledgment}
The authors would like to thank Professor Mohamed-Slim Alouini from King Abdullah University of Science and Technology (KAUST) for helping them improve the system model.

\bibliographystyle{IEEEtran}
\bibliography{ref}

\end{document}